\newif\ifready\readyfalse
\definecolor{mydarkpurple}{HTML}{E75480}
\definecolor{mydarkpink}{HTML}{000000}
\definecolor{mydarkred}{HTML}{008080}
\definecolor{mydarkorange}{HTML}{FF8C00}
\definecolor{mydarkblue}{HTML}{04009a}
\definecolor{mybluegreen}{HTML}{30BFBF}
\definecolor{mygreen}{HTML}{03C04A}
\definecolor{mypurple}{HTML}{7F00FF}
\definecolor{myyellow}{HTML}{ebdc78}
\definecolor{babyblue}{rgb}{0.54, 0.81, 0.94}
\definecolor{ballblue}{rgb}{0.13, 0.67, 0.8}
\definecolor{bittersweet}{rgb}{1.0, 0.44, 0.37}
\pgfplotsset{compat=1.3}
\pgfplotsset{
    compat=1.3,
    legend image code/.code={
        \draw [#1] (0cm,-0.1cm) rectangle (0.6cm,0.1cm);
    },
}
\setlist{noitemsep,topsep=0pt,parsep=0pt,partopsep=0pt}
\newtheorem{thm}{Theorem}
\newtheorem{definition}[thm]{Definition}
\newcommand{\defn}[1]{\textbf{\textit{#1}}}
\crefname{theorem}{Theorem}{Theorems}
\Crefname{lemma}{Lemma}{Lemmas}
\Crefname{claim}{Claim}{Claims}
\Crefname{observation}{Observation}{Observations}
\Crefname{algorithm}{Algorithm}{Algorithms}
\Crefname{myalgctr}{Algorithm}{Algorithms}
\Crefname{challenge}{Challenge}{Challenges}
\algrenewcommand\algorithmicindent{1em}%
\newcommand{\kc}{$k$-core\xspace}
\mathchardef\mhyphen="2D
\newcommand{\GreedyPP}{\texttt{Greedy++}\xspace}
\newcommand{\GreedySortingPP}{\texttt{GreedySorting++}\xspace}
\newcommand{\PARSortingPP}{\texttt{PaRSorting++}\xspace}
\newcommand{\PGreedyPP}{\texttt{ParallelGreedy++}\xspace}
\newcommand{\PARGreedyPP}{\texttt{PaRGreedy++}\xspace}
\newcommand{\PGreedySortingPP}{\texttt{ParallelGreedySorting++}\xspace}
\newcommand{\FISTA}{\texttt{FISTA}\xspace}
\newcommand{\pkmc}{\texttt{PKMC}\xspace}
\newcommand{\ccg}{\texttt{cCoreG++}\xspace}
\newcommand{\ccexact}{\texttt{cCoreExact}\xspace}
\newcommand{\julienne}{\texttt{Julienne}\xspace}
\newcommand{\ApproxGP}{\texttt{PaRApxGreedy++}\xspace}
\newcommand{\ApproxGS}{\texttt{PaRApxSorting++}\xspace}
\DeclarePairedDelimiter\ceil{\lceil}{\rceil}
\DeclareMathOperator{\poly}{poly}
\definecolor{mygreen}{RGB}{20,140,80}
\definecolor{linkcolor}{RGB}{0,0,230}
\definecolor{mylightgray}{RGB}{230,230,230}
\definecolor{verylightgray}{RGB}{245,245,245}
\newcommand{\etal}[0]{et al.}
\newcounter{myalgctr}
\newtcolorbox{OuterBox}[1][]{%
    breakable,
    enhanced,
    frame hidden,
    interior hidden,
    left=-5pt,
    right=-5pt,
    top=-5pt,
    float=p,
    boxsep=0pt,
    arc=0pt
#1}%
\newtcolorbox{InnerBox}[1][]{%
    enforce breakable,
    enhanced,
    colback=gray,
    colframe=white,
#1}%
\newenvironment{tbox}{
\vspace{0.2cm}
\begin{tcolorbox}[width=\columnwidth,
                  enhanced,
                  boxsep=2pt,
                  left=1pt,
                  right=1pt,
                  top=4pt,
                  boxrule=1pt,
                  arc=0pt,
                  colback=white,
                  colframe=black,
	              breakable
                  ]%
}{
\end{tcolorbox}
}
\newcommand{\tboxhrule}[0]{\vspace{0.1cm} {\color{black} \hrule} \vspace{0.2cm}}
\newenvironment{titledtbox}[1]{\begin{tbox}#1 \tboxhrule}{\end{tbox}}
\newcommand{\core}{k}
\newcommand{\eps}{\varepsilon}
\newcommand{\kest}{\hat{\core}}
\newcommand{\whp}{whp\xspace}
\newcommand{\geom}{\mathsf{Geom}}
\newcommand{\nparam}{\nparaminside}
\newcommand{\nparaminside}{\eps/(8\log^2n)}
\newcommand{\draw}{\geom(\nparam)}
\newcommand{\julian}[1]{}
\newcommand{\laxman}[1]{}
\newcommand{\qq}[1]{}
\newcommand{\pat}[1]{}
\newcommand{\qqnote}[1]{{}}
\newcommand{\lnote}[1]{{}}
\newcommand{\julian}[1]{{\color{cyan} Julian: #1}}
\newcommand{\laxman}[1]{{\color{brown} Laxman: #1}}
\newcommand{\qq}[1]{{\color{blue} Quanquan: #1}}
\newcommand{\pat}[1]{{\color{violet} Pat: #1}}
\newcommand{\qqnote}[1]{{\color{magenta}\footnote{\color{blue} Quanquan: #1}}}
\newcommand{\lnote}[1]{{\color{blue}\footnote{\color{brown} Laxman: #1}}}
\newcommand{\myparagraph}[1]{\smallskip \noindent {\bf #1.}}
\begin{document}
\title{Practical Parallel Algorithms for Near-Optimal Densest Subgraphs on Massive Graphs}
\author{
    Pattara Sukprasert\thanks{Databricks, San Francisco, CA} \thanks{A significant part of this work was done while P.S.\ was a Ph.D.\ candidate and Q.C. Liu was a postdoc at Northwestern University.}
    \and
    Quanquan C. Liu\thanks{Simons Institute at UC Berkeley, Berkeley, CA}
    \and
    Laxman Dhulipala\thanks{University of Maryland, College Park, MD}
    \and
    Julian Shun\thanks{MIT CSAIL, Cambridge, MA}
}
\date{}

\sloppy

\renewcommand{\algorithmicrequire}{\textbf{Input:}}
\renewcommand{\algorithmicensure}{\textbf{Output:}}
\algblock{ParFor}{EndParFor}
\algblock{Input}{EndInput}
\algblock{Output}{EndOutput}
\algblock{ReduceAdd}{EndReduceAdd}

\algnewcommand\algorithmicparfor{\textbf{parfor}}
\algnewcommand\algorithmicinput{\textbf{Input:}}
\algnewcommand\algorithmicoutput{\textbf{Output:}}
\algnewcommand\algorithmicreduceadd{\textbf{ReduceAdd}}
\algnewcommand\algorithmicpardo{\textbf{do}}
\algnewcommand\algorithmicendparfor{\textbf{end\ input}}
\algrenewtext{ParFor}[1]{\algorithmicparfor\ #1\ \algorithmicpardo}
\algrenewtext{Input}[1]{\algorithmicinput\ #1}
\algrenewtext{Output}[1]{\algorithmicoutput\ #1}
\algrenewtext{ReduceAdd}[2]{#1 $\leftarrow$ \algorithmicreduceadd(#2)}
\algtext*{EndInput}
\algtext*{EndOutput}
\algtext*{EndIf}
\algtext*{EndFor}
\algtext*{EndWhile}
\algtext*{EndParFor}
\algtext*{EndReduceAdd}

\maketitle

\begin{abstract}
The densest subgraph problem has received significant attention, both in theory and in practice, due to its applications in problems such as community detection, social network analysis, and spam detection. 
Due to the high cost of obtaining exact solutions, 
much attention has focused on designing approximate densest subgraph algorithms. 
However, existing approaches are not able to scale 
to massive graphs with billions of edges. 

In this paper, we introduce
a new framework that combines approximate densest subgraph algorithms with a pruning optimization. We design new parallel variants of the state-of-the-art sequential \GreedyPP\ algorithm, and plug it into our framework in conjunction with a parallel pruning technique based on $k$-core decomposition to obtain parallel $(1+\eps)$-approximate densest subgraph algorithms. 
On a single thread, our algorithms
achieve $2.6$--$34\times$ speedup over \GreedyPP, and obtain up to $22.37\times$ self-relative parallel speedup on a 30-core machine with two-way hyper-threading. 
Compared with the state-of-the-art parallel algorithm by Harb et al.
[NeurIPS'22]
, we achieve up to a $114\times$ speedup on the same machine. 
Finally, against the recent sequential algorithm of Xu et al.
[PACMMOD'23]
, we achieve 
up to a $25.9\times$ speedup.
The scalability of our algorithms enables us to obtain near-optimal density statistics on the \texttt{hyperlink2012} (with roughly 113 billion edges) and \texttt{clueweb} (with roughly 37 billion edges) graphs  for the first time in the literature.

\end{abstract}

\section{Introduction}

The densest subgraph problem is a fundamental problem in graph mining 
that has been studied extensively for decades, both because of its theoretical challenges
and its practical importance. 
The numerous applications of the problem include
community detection and visualization in social networks~\cite{Alvarez2005,CHKZ03,GJLS13,JXR09,KRRT99,RTG14}, 
motif discovery in protein and DNA~\cite{DHZ22,Fratkin06,SSABCHMM15}, and pattern identification~\cite{AKSSST14,DJDLT09,fraudar16}.

Significant effort has been made in the theoretical computer science community in computing 
exact and approximate densest subgraphs under various models of computation, in particular in the 
static~\cite{greedyPP,Charikar00,CQT22,KS09,TG15}, streaming~\cite{BHNT15}, distributed~\cite{BGM14,Ghaffari2019,SuVu20}, 
parallel~\cite{bahmani2012densest,dhulipala2018theoretically,DCS17,harb2022faster}, dynamic~\cite{BHNT15,ChekuriQ22dynamic,CHHRS22,SW20}, 
and privacy-preserving~\cite{DLRSSY22,AHS21,NV21} settings. 
However, despite a plethora of theoretical 
improvements on these fronts, there still does not exist practical near-optimal
densest subgraph algorithms that can scale up to the largest publicly-available graphs 
with tens to hundreds of billions of edges. 
In particular, for the largest such graphs, \texttt{hyperlink2012} (with roughly 113 billion edges) and \texttt{clueweb} (with roughly 37 billion edges), no previous
approximations for the densest subgraph were known that are better than a $2$-approximation. 

There are two typical approaches for solving the densest subgraph problem exactly. The first is 
to solve a combinatorial optimization problem using a linear program solver. The other 
is to set up a flow network with size polynomial in the size of the original graph, and then run a maximum flow algorithm on it. However, 
the caveat to both approaches is that they are not scalable to  modern massive graphs; namely,
both approaches have large polynomial runtimes and the best theoretical algorithms for 
these approaches are often not practical. %
Because of this bottleneck, many have instead
investigated approaches for approximate densest subgraphs. 

The best-known approximation algorithms
for the densest subgraph problem fall into two categories. 
The first category contains parallel approximation algorithms, which work by iteratively removing  carefully chosen subsets of low-degree vertices while computing the density of the induced subgraph of the remaining 
vertices; then, the induced subgraph with the largest density is taken as the 
approximate densest subgraph~\cite{bahmani2012densest,BHNT15,Charikar00} 
using $\poly(\log n)$ rounds of peeling vertices with degree smaller than some threshold.
Unfortunately, such 
methods give $(2+\eps)$-approximations at best and no one has thus far made such methods
work in $\poly(\log n)$ rounds and give better approximations. 

The second category consists of algorithms obtained from the \emph{multiplicative weight update (MWU)} method. 
The multiplicative weight update framework approximately solves an optimization problem by 
using expert oracles to update the weights assigned to the variables multiplicatively and iteratively
over several rounds depending on how the experts performed in previous rounds. The MWU framework 
allows for obtaining $(1+\eps)$-approximate densest subgraphs in $\poly(\log n)$ iterations; however, 
it requires more work than the peeling algorithm per iteration to update the weights of the variables. As such, neither approach 
is particularly scalable to massive graphs. 

In terms of practical solutions, Boob \etal~\cite{greedyPP} present
 a fast, sequential, iterative peeling algorithm called \GreedyPP\ that combines peeling with the MWU framework. 
Chekuri et al.~\cite{CQT22} show that running \GreedyPP\ for 
$\Theta(\frac{\Delta \log n}{\rho^* \eps^2})$ iterations results in a $(1+\eps)$-approximation of the densest subgraph, where $\rho^*$ is the density of the densest subgraph.
However, \GreedyPP\ is not parallel, and does not take advantage of modern multi-core and multiprocessor
architectures. Recently, Harb \etal~\cite{harb2022faster} proposed an iterative algorithm based
on \emph{projections} that solves a quadratic objective function with linear constraints derived from the 
dual of the densest subgraph linear program of Charikar~\cite{Charikar00}. For a graph with $m$ edges and maximum degree $\Delta$, they prove that their algorithm
converges to a $(1+\eps)$-approximation in $O(\sqrt{m \Delta}/{\eps})$ iterations,
where each iteration takes $O(m)$ work. 

Xu \etal~\cite{xu2023efficient} recently
introduce a framework for a generalized version of the densest 
subgraph problem that includes variants like the densest-at-least-$k$-subgraph problem. Their framework alternates between
iteratively using maximum flow to obtain denser subgraphs and then
peeling according to the $k$-core 
to shrink the graph for the next 
maximum flow iteration. However, their algorithm is not parallel and,
thus, cannot scale to the largest publicly available graphs. 
Parallel implementations exist that give $2$-approximations on the 
densest subgraph~\cite{dhulipala2017julienne,dhulipala2018theoretically,luo2023scalable},
but such algorithms and implementations achieve 
worse theoretical approximation guarantees than our 
$(1+\eps)$-approximation algorithms. We also 
demonstrate that they obtain worse empirical approximations.

In our work, we design fast practical algorithms that simultaneously make use of parallelism as well as the closely related concept of the 
\emph{$k$-core decomposition}. The $k$-core decomposition decomposes the graph into \emph{$k$-cores} 
for different values of $k$. Within the induced subgraph of each $k$-core, each vertex has degree at least $k$.
It is a well-known fact that the density of the densest subgraph is within a factor of $2$ of the 
maximum core value. However, it is less clear how to make use of this fact in creating scalable algorithms for 
the largest publicly-available graphs. In this paper, we design a pruning framework that, combined with 
our parallel densest subgraph subroutines, results in both theoretical as well as practical improvements
over the state-of-the-art. The main idea of our framework is to iteratively prune the graph using lower bounds on the density of densest subgraph computed from our parallel  densest subgraph subroutines, while preserving the densest subgraph. 

The concept of using pruning to obtain a smaller subgraph from which to approximate the densest subgraph is also used in some recent works~\cite{FangYCLL19, xu2023efficient}. 
However, in their works, the pruning procedures 
they use are \emph{inherently sequential}.
Compared to previous work, we introduce a \emph{parallel, iterative} 
pruning approach in this paper
and demonstrate via our comprehensive experimentation
that our algorithms are more efficient and more scalable than
all previous baselines.

    Specifically, we give parallel peeling-based MWU and
    sorting-based MWU iterative algorithms that use pruning and 
    are based on \GreedyPP~\cite{greedyPP}. Our algorithms 
    achieve the same theoretical number of iterations as Chekuri \etal~\cite{CQT22}, but is more amenable to parallelization.
    Experimentally, on an $30$-core machine with hyperthreading, our parallel sorting-based algorithm outperforms our parallel peeling-based algorithm, as well as previous state-of-the-art algorithms on most graphs. 
    For instance, compared with the state-of-the-art parallel algorithm by Harb et al.~\cite{harb2022faster}, we achieve up to a $114\times$
    speedup on the same machine.

    Leveraging the scalability of our parallel algorithms, we provide a number of previously unknown graph statistics and 
    graph mining results on the largest of today's publicly available graphs, \texttt{hyperlink2012} and \texttt{clueweb}, using
    commodity multicore machines. We also provide statistics (such as the 
    empirical \emph{width}) that may prove to be interesting and useful
    in aiding future work on this topic.

\section{Preliminaries}
\label{sec:prelim}

Given an undirected, unweighted graph $G=(V,E)$, 
let $n = |V|$ and $m= |E|$.
Let $\deg_G(v)$ be the degree of vertex $v$ in $G$. 
We define the \defn{density} of $G$ to be $\rho(G) = \frac{|E|}{|V|}$. 
The goal of the \defn{densest subgraph} problem is to find a subgraph $S \subseteq G$, such that $\rho(S)$ is maximized. We will use $S^*$ to denote a densest subgraph of $G$ with maximum density $\rho^*$.

    \begin{table}
    \begin{center}
    \footnotesize
    \begin{tabular}{cc}
    \toprule
    Symbol & Meaning \\
    \midrule
    $G=(V,E)$ & undirected, unweighted input graph\\
    \hline
    $n,m$ & number of vertices, edges resp. \\
    \hline
    $\deg(v)$ & current degree of vertex $v$ \\
    $\Delta$ & current maximum degree of graph\\
    \hline
    $c_p$ & Peeling complexity\\
    \hline
    $\rho(G)$ & current density of graph $G$\\
    $\rho^*$ & maximum induced subgraph density of graph $G$\\
    $\tilde \rho$ & the best density found in our algorithms \\
    \hline
    $core(G,k)$ & $k$-core of $G$\\
    $core(v)$ & core number of $v$\\
    $k_{max}$ & max non-empty core number\\
    \hline
    $\ell(v)$ & current load of vertex $v$\\
    \hline
    \end{tabular}
    \end{center}
    \caption{Common notation used throughout the paper. }
    \label{tab:notations}
    \end{table}

A central structure that we study is the $k$-core of an undirected graph. We now define $k$-core formally.

\begin{definition}[$k$-core]
    A $k$-core $core(G,k)$ of $G$ is defined to be a maximal vertex-induced subgraph $S \subseteq G$ such that $\deg_{S}(v) \geq k$ for any $v \in V(S)$.
\end{definition}

It is well known that to find $core(G,k)$, one can repeatedly \emph{peel}\footnote{Throughout this paper, we say a vertex $v$ is \emph{peeled} from $G$ when $v$ and all its adjacent edges are deleted.}
an arbitrary vertex $v$ from $G$ so long as $\deg_G(v) < k$.
This process terminates when all remaining vertices have degree at least $k$, or the graph becomes empty.
If the remaining graph is not empty, then it is the unique subgraph, $core(G,k)$.
Next, we define the {\em coreness} or {\em core number} of a vertex $v$:

\begin{definition}[Core number]
    For any vertex $v$, we let $core(v) = k$ if $k$ is the maximum integer such that $v$ is in $core(G,k)$.
\end{definition}

An easy modification of the peeling algorithm described above yields $core(v)$ for all vertices $v$.
We call this peeling-based algorithm \defn{Coreness}.
In this algorithm, we pick a vertex with minimum degree and peel it one at a time until there are no vertices left.
Let $D$ be a variable that represents the maximum degree of peeled vertices at the time we peel them. 
Initially, $D=0$.
Once $v$ is about to be peeled, we set $D \leftarrow \max(D,\deg_G(v))$. 
We then set $core(v) \leftarrow D$ and peel $v$ from $G$.
We refer to the ordering of vertices that we peel in this process as a \defn{degeneracy ordering} of the graph, which is unique up to permuting vertices in the order with the same coreness.

We also use the following notion of $c$-approximate $k$-core decomposition, which can be computed more efficiently than exact $k$-core. 

\begin{definition}[\unboldmath{$c$}-Approx \unboldmath{$k$}-Core Decomposition]\label{def:approx-k-core}
    A \defn{$c$-approximate} \defn{\kc decomposition}
    is a partition of vertices into layers, such that a vertex $v$ is
    in approximate core $\kest(v)$, denoted $apxcore(G, \kest(v))$, only if
    $\frac{k(v)}{c} \leq \kest(v) \leq ck(v)$, where $k(v)$ is the coreness of $v$.
\end{definition}

Later on, we will want to find an ordering that is similar to the degeneracy ordering, but certain {\em loads} of vertices are also given as input.
Let $\ell(v)$ be {\em load} of $v$. 
At each step, we peel the vertex that minimizes the term $\ell(v) + \deg_G(v)$. Note that after $v$ is peeled, the induced degrees $\deg_G(v')$
of $v$'s neighbors $v'$ are decreased. 
As a special case, we obtain the degeneracy ordering by setting $\ell(v) = 0$ for all $v$. 
For the remainder of this paper, we refer to the ordering obtained using
$\ell(v) + \deg_G(v)$ as the \defn{load ordering}.

\myparagraph{Model Definitions}
We analyze the theoretical efficiency of our parallel algorithms in the \defn{work-depth} model~\cite{CLRS,JaJa92}.
In this model, the \defn{work} is the total number of operations executed by the algorithm and the \defn{depth} (parallel time) is the longest chain of sequential dependencies.
We assume that concurrent reads and writes are supported in $O(1)$ work/depth.
A \defn{work-efficient} parallel algorithm is one with work that asymptotically matches the best-known sequential time complexity for the problem. All of our algorithms presented in this paper are work-efficient.
We say that a bound holds \defn{with high probability (\whp{})} if it holds with probability at least $1-1/n^c$ for any $c\geq1$. 

We use the following parallel primitives in our algorithms:
\defn{ParFor}, \defn{SuffixSum}, \defn{FindMax}, \defn{Bucketing}, 
and \defn{IntegerSort}.
Each primitive takes a sequence $A$ of length $n$. 
\defn{ParFor} is a parallel version of a for-loop that we use
to apply a function $f$ to each element in the sequence.
If a function $f$ takes $O(t)$ work and $O(d)$ depth, then \defn{ParFor} takes $O(tn)$ work and $O(d)$ depth.
\defn{SuffixSum} returns a sequence $B$ where $B[j] = \sum_{i=j}^n A[i]$.
\defn{FindMax} returns an element with maximum value among those in the sequence. 
\defn{SuffixSum} and \defn{FindMax}
can be implemented to take $O(n)$ work and $O(\log n)$ depth.
\defn{IntegerSort} returns a sequence in sorted order (either non-increasing or non-decreasing order) according to integer keys.
We use two different implementations of \defn{IntegerSort}:
the first is an algorithm by Raman~\cite{raman1991} which takes $O(n \log \log n)$ expected work and $O(\log n)$ depth \whp{}, and the second is
a folklore algorithm that takes $O(n/\eps)$ work and $O(n^\eps)$ depth for $0<\eps<1$~\cite{vishkin2010thinking}.
The decision to use one of these two sorting algorithms depends on whether work or depth is more important.
We state the complexity of our algorithm in both ways when necessary.

\subsection{Pruning with Cores}
\label{sec:pruning}
In this section, we describe a pruning idea that takes an input graph $G$ and outputs a subgraph $H \subseteq G$ such that (1) $H$ is smaller than $G$ and (2) any densest subgraph $S^* \subseteq G$ is in $H$.
We begin with a property that relates a graph's density and its vertices' degrees. 

\begin{lemma}[Folklore, (see, e.g., \cite{CQT22})] %
\label{lem:deg_rho}
    Given $G = (V, E)$, if there is a vertex $v$ with degree $\deg_G(v) < \rho(G)$, then $G' = G \setminus \{v\}$ is a graph with density $\rho(G') > \rho(G)$.
\end{lemma}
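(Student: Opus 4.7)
The plan is to carry out a direct algebraic comparison of $\rho(G')$ and $\rho(G)$. First I would write out the number of vertices and edges of $G'$ explicitly: removing the single vertex $v$ (together with its incident edges) yields $|V(G')| = |V| - 1$ and $|E(G')| = |E| - \deg_G(v)$. Plugging into the definition of density gives
\[
\rho(G') \;=\; \frac{|E| - \deg_G(v)}{|V| - 1}.
\]

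Next I would compare this fraction to $\rho(G) = |E|/|V|$ by cross-multiplying. Since $|V| \ge 1$, the inequality $\rho(G') > \rho(G)$ is equivalent to
\[
|V|\bigl(|E| - \deg_G(v)\bigr) \;>\; (|V|-1)\,|E|,
\]
which after cancellation simplifies to $|E| > |V|\cdot \deg_G(v)$, i.e.\ $\deg_G(v) < |E|/|V| = \rho(G)$. This is exactly the hypothesis, so the equivalence closes the proof.

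The only real subtlety is a degenerate case check: the manipulation requires $|V| > 1$ so that $|V|-1 > 0$ and $G'$ is a well-defined graph with positive vertex count. But if $|V| = 1$, then $|E| = 0$ and $\rho(G) = 0$, so no vertex could satisfy $\deg_G(v) < \rho(G) = 0$, and the hypothesis is vacuous. Hence the inequality is genuinely strict whenever the hypothesis can hold, and there is no real obstacle — the lemma is essentially a one-line arithmetic rearrangement of the hypothesis.
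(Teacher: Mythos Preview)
Your proof is correct and essentially the same as the paper's: both are direct algebraic verifications that $\rho(G') > \rho(G)$ reduces to the hypothesis $\deg_G(v) < \rho(G)$. The paper phrases the algebra as a weighted-average observation---writing $\rho(G) = \frac{|V|-1}{|V|}\rho(G') + \frac{1}{|V|}\deg_G(v)$ and noting that since $\deg_G(v) < \rho(G)$ the other term must exceed $\rho(G)$---whereas you cross-multiply directly, but the content is identical.
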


\begin{proof}
It holds that 
\begin{align*}
\rho(G) &= \frac{|E(G')| + \deg_G(v)}{|V|} \\
    &= \frac{|V|-1}{|V|} \cdot \frac{|E(G')|}{|V|-1} + \frac{\deg_G(v)}{|V|} \\
    &= \frac{|V|-1}{|V|} \cdot \rho(G') +  \frac{\deg_G(v)}{|V|}\\
    &= x \cdot \rho(G') + (1-x) \cdot \deg_G(v),
\end{align*}
for some real number $x \in (0,1)$.
The last line can be viewed as a weighted average between $\rho(G')$ and $\deg_G(v)$.
Since $\deg_G(v) < \rho(G)$, it has to be the case that $\rho(G') > \rho(G)$ so that their average becomes $\rho(G)$. 
\end{proof}

As a corollary, any vertex in a densest subgraph has induced degree at least $\rho(S^*)$.

\begin{corollary}
\label{col:deg_rho}
    Let $S^*$ be the densest subgraph. Then for any $v \in V(S^*)$, $\deg_G(v) \geq \deg_{S^*}(v) \geq \lceil \rho^* \rceil$. 
\end{corollary}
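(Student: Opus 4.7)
The plan is to derive the corollary as a direct consequence of Lemma~\ref{lem:deg_rho}, using a contradiction argument with the maximality of $S^*$.

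First, the inequality $\deg_G(v) \geq \deg_{S^*}(v)$ is immediate for any $v \in V(S^*)$: since $S^* \subseteq G$, every edge incident to $v$ in $S^*$ is also an edge incident to $v$ in $G$, so the induced degree in $S^*$ can only be at most the degree in $G$.

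Next, I would prove the key inequality $\deg_{S^*}(v) \geq \rho^*$ by contradiction. Suppose there exists $v \in V(S^*)$ with $\deg_{S^*}(v) < \rho^* = \rho(S^*)$. I would then apply Lemma~\ref{lem:deg_rho} to the graph $S^*$ itself (taking the role of $G$ in the lemma statement) with the vertex $v$. The hypothesis of the lemma is satisfied because $\deg_{S^*}(v) < \rho(S^*)$, so the lemma yields $\rho(S^* \setminus \{v\}) > \rho(S^*) = \rho^*$. This contradicts the assumption that $S^*$ is a densest subgraph of $G$, since $S^* \setminus \{v\}$ is also a subgraph of $G$ with strictly larger density. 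Hence every $v \in V(S^*)$ must satisfy $\deg_{S^*}(v) \geq \rho^*$.

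Finally, since $\deg_{S^*}(v)$ is an integer, the bound $\deg_{S^*}(v) \geq \rho^*$ upgrades to $\deg_{S^*}(v) \geq \lceil \rho^* \rceil$. Combining with the first inequality gives $\deg_G(v) \geq \deg_{S^*}(v) \geq \lceil \rho^* \rceil$.

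There is no substantive obstacle here; the statement is essentially a packaged form of Lemma~\ref{lem:deg_rho}. The only subtlety worth double-checking is the edge case where $|V(S^*)| = 1$, but in that case the lemma cannot be applied since $S^*\setminus\{v\}$ is empty; however, a single-vertex graph has density $0$, so $\rho^* = 0$ and the bound $\deg_{S^*}(v) \geq \lceil \rho^* \rceil = 0$ holds trivially. So the argument is complete once this degenerate case is briefly noted.
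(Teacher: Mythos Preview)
Your proof is correct and follows essentially the same approach as the paper, which simply remarks that the corollary ``follows immediately from Lemma~\ref{lem:deg_rho} because vertices $v$ with $\deg_{G}(v) < \rho^*$ can be peeled while increasing the density of the remaining subgraph.'' Your version is in fact more careful than the paper's one-line justification: you explicitly apply the lemma to $S^*$ rather than $G$, spell out the integrality step for the ceiling, and handle the degenerate single-vertex case.
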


\cref{col:deg_rho} follows immediately from~\cref{lem:deg_rho}
because vertices $v$ with $\deg_{G}(v) < \rho^*$ can be
peeled while increasing the density of the remaining subgraph.
Thus, a natural procedure that we have for obtaining the densest 
subgraph is to iteratively remove any vertex $v$ that 
has degree less than the current density of the subgraph.
Notice that this process is very similar to the algorithm for computing $core(G,k)$ described in \cref{sec:prelim}.
In fact, we can relate $k$-core to the densest subgraph.

\begin{lemma}
\label{lem:core_rho}
For some $k \leq \lceil \rho^* \rceil$, let $C = core(G,k)$ be the $k$-core of $G$. It must be the case that $S^* \subseteq C$.
\end{lemma}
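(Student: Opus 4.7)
The plan is to show that $S^*$ itself is a vertex-induced subgraph in which every vertex has induced degree at least $k$; then appeal to the maximality of the $k$-core to conclude that $S^* \subseteq C$.

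First, I would apply \cref{col:deg_rho}, which guarantees that for every $v \in V(S^*)$, the induced degree satisfies $\deg_{S^*}(v) \geq \lceil \rho^* \rceil$. Combined with the hypothesis $k \leq \lceil \rho^* \rceil$, this gives $\deg_{S^*}(v) \geq k$ for every $v \in V(S^*)$. Thus $S^*$ is a vertex-induced subgraph of $G$ in which every vertex has degree at least $k$.

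Next, I would invoke the defining property of $C = core(G,k)$: it is the (unique) maximal vertex-induced subgraph with minimum degree at least $k$. The cleanest way to turn ``maximal'' into ``contains every such subgraph'' is to consider the peeling procedure from \cref{sec:prelim} that repeatedly removes vertices of current degree less than $k$ and produces $C$ as its terminal graph. I would argue by induction on the peeling order that no vertex of $S^*$ is ever peeled: whenever the algorithm is about to remove some $v \in V(S^*)$, by the induction hypothesis no earlier peel has touched $V(S^*)$, so $v$'s neighbors in $S^*$ are all still present, and therefore its current degree in the remaining graph is at least $\deg_{S^*}(v) \geq k$, contradicting the peeling rule. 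Hence $V(S^*) \subseteq V(C)$, and since $C$ is vertex-induced we obtain $S^* \subseteq C$.

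The only mildly subtle step is the inductive peeling argument, since one must be careful that the induction correctly tracks that all of $V(S^*)$ remains intact up to the time $v$ would be removed; once that is in place, the rest is immediate from \cref{col:deg_rho} and the hypothesis on $k$.
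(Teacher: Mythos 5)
Your proof is correct. Both you and the paper start from the same key fact (\cref{col:deg_rho} plus the hypothesis $k \leq \lceil \rho^* \rceil$ gives $\deg_{S^*}(v) \geq k$ for every $v \in V(S^*)$), but you exploit the maximality of $C$ differently. The paper argues structurally by contradiction: if $S^* \not\subseteq C$, then the induced subgraph on $V(S^*) \cup V(C)$ has minimum degree at least $k$ (each vertex inherits the bound from whichever of $S^*$ or $C$ it belongs to) and strictly more vertices than $C$, contradicting maximality. You instead use the operational characterization of $core(G,k)$ via peeling and show by induction on the peeling order that no vertex of $S^*$ is ever removed, since the first such removal would have to happen while all of $V(S^*)$ is still intact, at which point the vertex's current degree is at least $\deg_{S^*}(v) \geq k$. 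Your route is slightly longer but more self-contained: it does not need the (true but implicitly used) fact that the union of two induced subgraphs of minimum degree $\geq k$ again has minimum degree $\geq k$, and it also directly handles the degenerate case by showing the peeling process can never empty the graph when $S^*$ is nonempty. The paper's version is shorter and purely structural. One small point to make explicit in your write-up: you need $S^*$ to be vertex-induced so that its degree within the surviving graph is at least $\deg_{S^*}(v)$; this is without loss of generality for a densest subgraph, and is the convention the paper uses.
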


\begin{proof}
We prove this by contradiction. Assume that $S^* \setminus C$ is non-empty (i.e., there is a vertex in $S^*$ but not in $C$). Let $H = S^*  \cup  C$. Notice that, for any vertex $v \in S^* \cup C$, it holds that
$\deg_H(v) \geq k$---if $v \in S^*$, then $\deg_H(v) \geq \deg_{S^*}(v) \geq \ceil{\rho^*} \geq k$ by~\cref{col:deg_rho}, and if $v \in C$, then $\deg_H(v) \geq \deg_C(v) \geq k$. Hence, $S^* \cup C$ is a $k$-core with more vertices than $C$, implying that $C$ is not maximal, which is a contradiction.
\end{proof}

\begin{corollary}[Folklore]\label{cor:kmax}
    Let $k_{max}$ be the maximum integer such that the $core(G,k_{max})$ is not empty. Let $C$ be the $\lceil \frac{k_{max}}{2} \rceil$-core. Then $S^* \subseteq C$. 
\end{corollary}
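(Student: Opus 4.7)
The plan is to apply Lemma~\ref{lem:core_rho} with $k = \lceil k_{max}/2 \rceil$. To do so, I need to verify the hypothesis of that lemma, namely that $\lceil k_{max}/2 \rceil \leq \lceil \rho^* \rceil$. So the entire argument reduces to establishing the folklore inequality $\rho^* \geq k_{max}/2$, and then invoking Lemma~\ref{lem:core_rho}.

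First I would exhibit a subgraph of density at least $k_{max}/2$ to lower bound $\rho^*$. The natural choice is $C_{max} = core(G, k_{max})$ itself. By definition, every vertex $v \in V(C_{max})$ has $\deg_{C_{max}}(v) \geq k_{max}$, so the handshake identity gives
\[
2|E(C_{max})| = \sum_{v \in V(C_{max})} \deg_{C_{max}}(v) \geq k_{max} \cdot |V(C_{max})|,
\]
which rearranges to $\rho(C_{max}) = |E(C_{max})|/|V(C_{max})| \geq k_{max}/2$. Since $\rho^* \geq \rho(C_{max})$, we conclude $\rho^* \geq k_{max}/2$, hence $\lceil \rho^* \rceil \geq \lceil k_{max}/2 \rceil$ by monotonicity of the ceiling function.

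Setting $k = \lceil k_{max}/2 \rceil$, this inequality is precisely the condition $k \leq \lceil \rho^* \rceil$ required by Lemma~\ref{lem:core_rho}. Applying that lemma yields $S^* \subseteq core(G, \lceil k_{max}/2 \rceil) = C$, completing the proof.

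There is no real obstacle here: the only subtle point is making sure the ceiling is handled correctly when passing from $\rho^* \geq k_{max}/2$ to $\lceil \rho^* \rceil \geq \lceil k_{max}/2 \rceil$, which follows from $a \geq b \Rightarrow \lceil a \rceil \geq \lceil b \rceil$. Since $k_{max}$ is a positive integer in any nontrivial case (and the statement is vacuous if $G$ has no edges), the corollary follows cleanly in one short step from the already-proven Lemma~\ref{lem:core_rho} together with the $k_{max}/2$ density lower bound.
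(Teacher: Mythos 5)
Your proof is correct and follows essentially the same route as the paper: lower-bound $\rho^*$ by the density of $core(G,k_{max})$ via the handshake identity, then invoke Lemma~\ref{lem:core_rho} with $k=\lceil k_{max}/2\rceil$. Your explicit handling of the ceiling step is a small added nicety the paper leaves implicit, but the argument is otherwise identical.
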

\begin{proof}
    For any $v$ in $C = core(G, k_{max})$, we have $\deg_S(v) \geq k_{max}$. Hence, 
    $$\rho(C) = |E(C)| / |V(C)| \geq \frac{\sum_{v \in V_C}\deg_C(v)/2}{|V(C)|} \geq k_{max}/2.$$
    Thus, $\rho^* \geq \rho(C) \geq k_{max}/2.$ It follows from \cref{lem:core_rho} that $S^*$ is contained in the $\lceil \frac{k_{max}}{2} \rceil$-core.
\end{proof}

Similarly, the largest non-empty $c$-approximate $k$-core, $\hat{k}_{max}$, also gives us a lower bound on $\rho^*$, in terms of the density of a (potentially larger) approximate core with smaller approximate core number: %

\begin{corollary}\label{cor:apxkmax}
    Let $\hat{k}_{max}$ be the maximum integer such that the $apxcore(G,\hat{k}_{max})$ is not empty. Let $C$ be the $\lceil \frac{\hat{k}_{max}}{2c} \rceil$-approximate core. Then $S^* \subseteq C$. 
\end{corollary}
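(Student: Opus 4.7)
The plan is to mirror the proof of \cref{cor:kmax}, translating between true and approximate coreness using \cref{def:approx-k-core} at the appropriate steps. The overall strategy is: first, use the approximation guarantee to lower-bound the true maximum coreness $k_{max}$ in terms of $\hat{k}_{max}$; second, derive a lower bound on $\rho^*$ from this; and third, conclude via \cref{lem:core_rho} that $S^*$ lies inside the target approximate core $C$.

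For the first step, I would pick any vertex $v^*$ achieving $\hat{k}(v^*) = \hat{k}_{max}$. The right inequality of \cref{def:approx-k-core} gives $\hat{k}(v^*) \leq c \cdot k(v^*)$, so $k_{max} \geq k(v^*) \geq \hat{k}_{max}/c$. Substituting this into the density computation from \cref{cor:kmax}---that the non-empty true $k_{max}$-core has density at least $k_{max}/2$---yields
\begin{equation*}
\rho^* \;\geq\; \frac{k_{max}}{2} \;\geq\; \frac{\hat{k}_{max}}{2c}.
\end{equation*}

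Next, \cref{lem:core_rho} applied with $k = \lceil \hat{k}_{max}/(2c) \rceil \leq \lceil \rho^* \rceil$ places $S^*$ inside the true $\lceil \hat{k}_{max}/(2c) \rceil$-core. To finish, I would use \cref{def:approx-k-core} once more to argue that every vertex of this true core belongs to the $\lceil \hat{k}_{max}/(2c) \rceil$-approximate core $C$, completing $S^* \subseteq C$.

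The main subtlety is in this last step: \cref{def:approx-k-core} only guarantees $\hat{k}(v) \geq k(v)/c$, so a vertex with true coreness $k$ is only automatically in the $\lceil k/c \rceil$-approximate core, not the $k$-approximate core. The argument therefore leans on interpreting the $\lceil \hat{k}_{max}/(2c) \rceil$-approximate core as a superset of the corresponding true core (which is natural if one reads $apxcore(G, k)$ as a relaxation that swallows an additional factor of $c$); pinning down this interpretation from \cref{def:approx-k-core} is the one place I would expect to have to be careful, since a naive reading would only give $1/(2c^2)$ rather than the claimed $1/(2c)$.
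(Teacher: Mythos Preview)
Your route to the inequality $\rho^* \geq \hat{k}_{max}/(2c)$ differs from the paper's. The paper argues directly that every vertex $v$ in $apxcore(G,\hat{k}_{max})$ has induced degree at least $\hat{k}_{max}/c$ in that subgraph, then averages to get a density bound, mirroring \cref{cor:kmax} line by line. You instead convert $\hat{k}_{max}$ into a lower bound on the true $k_{max}$ via the right inequality of \cref{def:approx-k-core} and then invoke the density bound already established in \cref{cor:kmax}. Both routes reach the same inequality and then apply \cref{lem:core_rho}. Your detour through the true $k_{max}$ is arguably the more robust of the two: the paper's degree claim does not obviously follow from \cref{def:approx-k-core} alone, since knowing $k(v) \geq \hat{k}_{max}/c$ places $v$ in a true core that may strictly contain $apxcore(G,\hat{k}_{max})$, and induced degrees in the smaller set could in principle be lower.

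On the final step---passing from the true $\lceil \hat{k}_{max}/(2c)\rceil$-core to the approximate one---your diagnosis is accurate, and the paper's proof does not resolve it either: it simply declares the argument ``identical'' to \cref{cor:kmax} and stops at what is effectively the true-core containment. So the subtlety you flag is real, shared by both arguments, and hinges on reading $apxcore(G,j)$ as a superset of $core(G,j)$ rather than as a literal consequence of the two-sided bound in \cref{def:approx-k-core}. Your proposal is otherwise correct and at least as complete as the paper's own proof.
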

\begin{proof}
The proof is identical to that of \cref{cor:kmax}; the only difference is that
since the core is approximate, the lower bound on $\deg_{S}(v)$ for any
$v$ in $apxcore(G, \hat{k}_{max})$, is $\hat{k}_{max}/c$.
\end{proof}

\begin{figure}
  \includegraphics[width=\columnwidth]{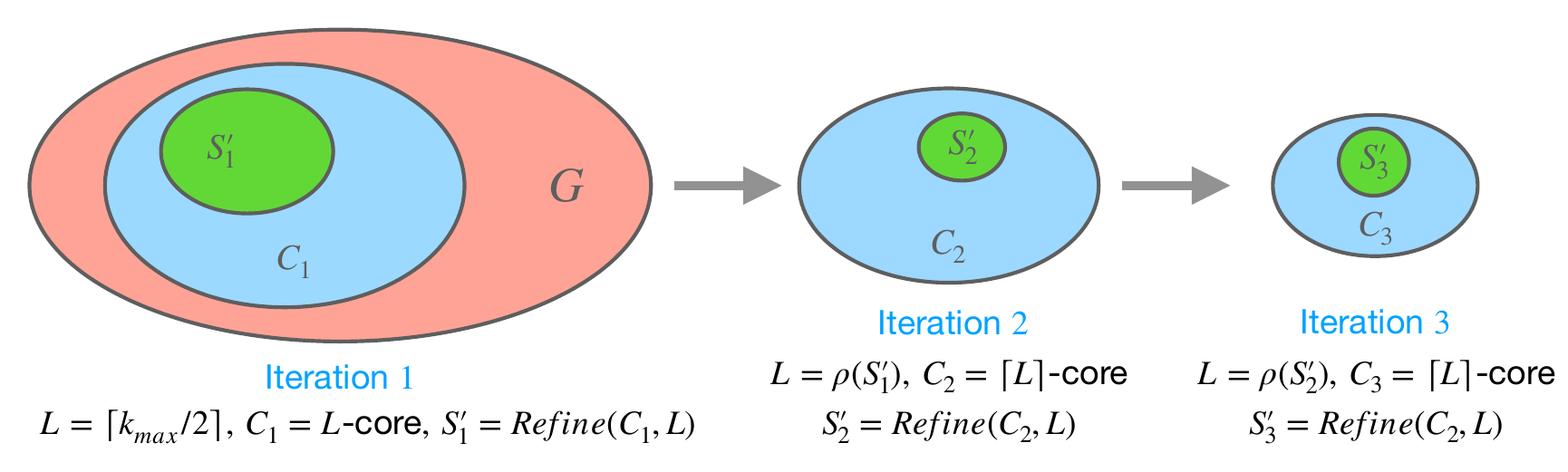}
  \caption{\small Example illustrating the Pruning-and-Refining Framework (Algorithm~\ref{alg:prune-and-iterate}).
  The $i$-th iteration of the algorithm computes a lower bound $L$ on the density, computes the $C_i = \lceil L \rceil$-th core of $G$, and then applies a $Refine$ algorithm on $C_i$ to compute a new subgraph $S'_{i}$. In the example, the density of each successive $S'_{i}$ is increasing, and the cores $C_i$ decrease in size.
  \label{fig:par}}
\end{figure}

\section{Pruning-and-Refining Framework}
Based on the properties described in \cref{sec:prelim}, any algorithm that yields
a lower bound on $\rho^*$ can be used for pruning the graph while retaining the densest subgraph. 
The main idea of the framework is as follows.
Let $L$ be a lower bound on $\rho^*$. 
We can prune the input graph $G$ by computing $G'$, which is the $\lceil L \rceil$-core of $G$ and then search for the densest subgraph in $G'$ instead of $G$. 
This process can be repeated multiple times, 
making it useful in algorithms that iteratively refine (tighten) the
lower bounds for $\rho^*$ over a sequence of steps.

To the best of our knowledge, the idea of using cores
to prune the graph adaptively while refining the approximate densest subgraph solution has not been done in the literature.
The closest idea is from Fang et al.~\cite{FangYCLL19} and Xu et al.~\cite{xu2023efficient}.
In~\cite{FangYCLL19}, their pruning rules first compute \defn{Coreness}, and then inspect connected components  from the $\lceil \frac{k_{max}}{2}\rceil$-core. They take the maximum density found among the connected components as a lower bound and use $k_{max}$ as an upper bound. 
They then run a flow-based algorithm on each connected component separately. Note that flow-based algorithm can only tell if a graph has a subgraph of a specific density $\tilde{\rho}$, so a binary search over the optimal density is required to solve the densest subgraph problem.
Their pruning rules do not help much if there is only a single component in the $\lceil \frac{k_{max}}{2}\rceil$-core. 
Then, in~\cite{xu2023efficient}, they give a sequential pruning-based algorithm based on flow where their
implementation prunes the graph at the beginning and runs a flow-based algorithm to find approximate densities.

\subsection{Framework Overview}
We apply this idea in an algorithmic framework for computing an 
approximate densest subgraph, which is shown in \cref{alg:prune-and-iterate}.
The pseudocode uses {\em exact pruning}, i.e., it uses the value of the exact $k_{max}$-core, but
we also describe how to use approximate $k$-cores below.
On Lines 1--4, we compute the lower bound $L$ by applying \cref{cor:kmax} and either an exact $k$-core algorithm or an approximate $k$-core algorithm. 
Both algorithms take $O(m + n)$ work, but approximate $k$-core has provably poly-logarithmic depth. 
For exact $k$-core, we use the bucketing-based $k$-core implementation of \cite{dhulipala2017julienne,dhulipala2018theoretically}.
The algorithm iteratively peels all vertices with degree at most $d$ in parallel, starting with $d=0$, and incrementing $d$ whenever there are no more vertices with degree at most $d$. The algorithm 
takes $O(m+n)$ expected work and $O(c_p\log n)$ depth with high probability,
where $c_p$ is the \defn{peeling complexity}, which is defined as the number of iterations needed to completely peel the graph.
For approximate $k$-core, we use the implementation of Liu et al.~\cite{LSYDS22}, which gives a $(2+\delta)$-approximation to all core numbers and takes $O(m+n)$ expected work and $O(\log^3n)$ depth \whp{}.
Line 2 shows the lower bound $L$ given $k_{max}$, but we can alternatively compute the lower bound for the approximate $k$-core approach using \cref{cor:apxkmax}.
Given the coreness values in $cores$, we extract the $j$-core from $G$ using $getCore(G,cores,j)$ on Line 3.

On Lines 5--10, we iterate for $T$ rounds, where each round calls a function $Refine$ 
which computes subgraphs with potentially higher density. Note that for algorithms that we use in our paper, our $Refine$ step on Line 6 is oblivious to the parameter $L$. However, knowing $L$ might be useful for other algorithms (e.g., flow-based algorithms). After each refinement step, we then get a potentially better solution, which we memorize in Line 7--9. 
In line 10, we leverage this lower bound $L$ by shrinking $G$ to be $\lceil L \rceil-$core. By \cref{lem:core_rho}, the densest subgraph $\lceil L \rceil$-core contains densest subgraph $S^*$. 
We then return the approximate densest subgraph on Line 11. In \cref{sec:algo}, we describe various options for the $Refine$ function.

\begin{algorithm}[t]\caption{Pruning-and-Refining Framework}\label{alg:prune-and-iterate}
    \SetKwInOut{Input}{Input}
    \SetKwInOut{Output}{Output}
    \SetKwFor{For}{for}{do}{endfor}
    \SetKw{KwIn}{in}
    \SetKw{KwLet}{let}
    \DontPrintSemicolon
    
    \Input{an input graph $G=(V,E)$, number of iterations $T$}
    \Output{an approximate densest subgraph $S$}
    \SetAlgoVlined
    $cores, k_{max} \leftarrow Coreness(G)$\;
    $L \leftarrow \lceil k_{max}/2 \rceil$ %
    
    $G \leftarrow getCore(G, cores, L)$ %
    
    $S \leftarrow G$ \tcp*{Initial pruning}

    \For{i=1 \KwTo T}{
        $S' \leftarrow Refine(G, L)$ \tcp*{Refine candidate subgraph}
        \If{$\rho(S') > \rho(S)$}{ 
            $S \leftarrow S'$\;
            $L \leftarrow \max(L, \rho(S))$ \;
            $G \leftarrow getCore(G, cores, \lceil L \rceil)$ 
        }
    }
    \Return {S}
\end{algorithm}

\subsection{Refinement Algorithms}\label{sec:algo}
Next, we describe algorithms that can be used for the $Refine$ function in \cref{alg:prune-and-iterate}. 
We first describe two existing sequential algorithms, the peeling algorithm and \GreedyPP, and then introduce our parallel algorithms.

\myparagraph{Peeling Algorithm~\cite{Charikar00}} %
At each step, we compute the density of the current graph. Then, we pick a vertex $v$ with the minimum induced degree and remove it from the graph. We continue until there are no vertices remaining, and return the subgraph with the maximum density found in this process. 
The peeling algorithm can be parallelized~\cite{dhulipala2017julienne}, but can have linear depth in the worst case.
Charikar~\cite{Charikar00} proves that the subgraph returned by this peeling algorithm has a density at least half the optimum density, i.e., it gives a $2$-approximation to the densest subgraph.

\myparagraph{\GreedyPP~\cite{greedyPP}} %
\cref{alg:greedypp} presents a greedy load-based densest subgraph algorithm, for which the state-of-the-art
\GreedyPP\ algorithm is a special case. 
Initially, each vertex $v$ is associated with a load $\ell(v) = 0$ (Lines 2--3). 
The algorithm runs for $T$ iterations (Lines 4--11).
On each iteration, we compute the degeneracy order $O$ with respect to the load $\ell$ on graph $H$ to obtain the ordered set of vertices $v_1, \ldots, v_n$ (Line 6).
Then, on Lines 7--11, we peel vertices in this order. When $v_i$ is peeled, we compare the density of the remaining subgraph to the density of the best subgraph found so far, and save the denser of the two. 
We also update the loads, setting $\ell(v_i) \leftarrow \ell(v_i) + \deg^*_H(v_i)$,
where $\deg^*_H(v_i)$ here is the induced degree of $v_i$ when it is peeled.
We return the best subgraph found after $T= \Theta(\frac{\Delta \log n}{\rho^* \eps^2})$ iterations, where
 $\Delta$ is the maximum degree and $0 < \eps < 1$ 
 is an adjustable parameter.
This algorithm yields a $(1+\eps)$-approximation of the densest subgraph as shown in~\cite{CQT22}.
The first iteration of \GreedyPP is exactly the peeling algorithm of Charikar. 

The original \GreedyPP\ algorithm  is implemented in a way where the  degeneracy ordering and the update steps are fused together.
It will become clear once we introduce our algorithm below why we decouple these two steps for obtaining greater parallelization.

\begin{algorithm}[t!]\caption{Greedy Load-Based Densest Subgraph}\label{alg:greedypp}
    \SetKwInOut{Input}{Input}
    \SetKwInOut{Output}{Output}
    \SetKwFor{For}{for}{do}{endfor}
    \SetKw{KwIn}{in}
    \SetKw{KwLet}{let}
    \DontPrintSemicolon
    
    \Input{an input graph $G=(V,E)$, number of iterations $T$, ordering function $O$}
    \Output{an approximate densest subgraph $S$}
    \SetAlgoVlined
    \For{v \KwIn V}{
        $\ell(v) \leftarrow 0$ \;
    }

    $H=(V_H, E_H) \leftarrow (V, E)$\;
    $S \leftarrow G$\;
    \For{i=1 \KwTo T}{
        \KwLet $v_1, \ldots, v_n$ be the ordering provided by the function $O$\;
        \For{j=1 \KwTo n}{
            \If{$\rho(H) > \rho(S)$}{
                $S \leftarrow H$ \;
            }
            $\ell(v_j) \leftarrow \ell(v_j) + \deg^*_H(v_j)$\; 
            \tcp*{$\deg^*_H(v_j)$ is degree of $v_j$ when peeled}
            $H \leftarrow H\setminus \{v_j\}$\;
        }
    }

    \Return{$S$}
        
\end{algorithm}

\myparagraph{\GreedySortingPP~(our algorithm)} 
Our second algorithm uses a simpler method for computing $O$ than \GreedyPP, in that it orders vertices based on their loads at the \emph{beginning} of the iteration. 
The motivation for this algorithm is that sorting is highly parallelizable and is also faster in practice than the iterative peeling process used in \GreedyPP (which has linear depth). 
Therefore, on Line 6 of \cref{alg:greedypp}, we compute $v_1, \ldots, v_n$, such that $\ell(v_1) \leq \ell(v_2) \leq \ldots \leq \ell(v_n)$. 
Because of the way we decouple the ordering and update steps in \GreedyPP, Line 6 is the only difference between the two algorithms.
Next, we argue that \GreedySortingPP\ has the same guarantees in terms of the approximation and number of rounds as \GreedyPP.

\begin{thm}
\label{thm:sortingpp_work}
For $T = \Theta\left( \frac{\Delta \log n}{\rho^* \eps^2} \right )$, \GreedySortingPP\ outputs a $(1+\eps)$-approximation to the densest subgraph problem.
\end{thm}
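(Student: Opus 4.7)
The plan is to adapt the primal-dual analysis of \GreedyPP\ from Chekuri, Quanrud, and Torres [CQT22], which frames the algorithm as a multiplicative-weights-update (MWU) procedure on the dual of Charikar's LP. In their framework, the loads $\ell(v)$ accumulate across rounds to approximate optimal dual variables, and the analysis shows that after $T$ iterations the best density found satisfies $\rho(S) \geq \rho^*/(1 + O(\sqrt{\Delta \log n / (T \rho^*)}))$. Substituting $T = \Theta(\Delta \log n/(\rho^* \eps^2))$ then yields the claimed $(1+\eps)$-approximation, so the entire task reduces to showing that replacing the dynamic load ordering (by $\ell(v)+\deg_H(v)$) with the static sorted ordering (by $\ell(v)$ alone) preserves the per-iteration invariants CQT22 rely on.

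First, I would isolate precisely what CQT22 use about the peeling order inside a single iteration. Their argument needs, essentially, that the sequence of \emph{final} loads $\ell(v_j) + \deg^*_H(v_j)$ produced in iteration $i$ serves as a near-feasible dual for an LP whose optimum is $\rho^*$, so that averaging over the $T$ iterations yields a dual solution of value near $\rho^*$. The only structural property they actually exploit is that, when vertex $v_j$ is peeled, no later-peeled vertex has a smaller ``total'' $\ell+\deg$ value at that moment. I would then show that after sorting by $\ell$, this property holds up to an additive $\Delta$ slack: if $\ell(v_j) \le \ell(v_k)$ for $j < k$ at the start of the iteration, then since $\deg^*_H$-values are trivially bounded by $\Delta$, the final loads are sorted up to an $O(\Delta)$ perturbation.

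Next, I would feed this relaxed ordering guarantee into CQT22's potential function argument. The key step is to re-derive their per-iteration decrease in the potential with the $O(\Delta)$ additive slack per vertex; since CQT22 already absorb an additive $O(\Delta)$ error across all iterations into the $\eps^2$ factor that determines $T$, the same choice $T = \Theta(\Delta \log n/(\rho^* \eps^2))$ suffices. Concretely, I would rebuild their dual weight assignment directly from the final loads produced by \GreedySortingPP, verify that the rescaled loads form a feasible dual up to a $(1+\eps)$ factor, and conclude via LP duality that some iterate $S$ satisfies $\rho(S) \geq \rho^*/(1+\eps)$.

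The main obstacle I expect is the first few iterations, when all loads are zero or very small and the sorted order is essentially arbitrary---here the ordering cannot be related to a dynamic load ordering in any meaningful way. To handle this, I would observe that iteration $1$ coincides with Charikar's $2$-approximation peeling regardless of how ties in $\ell$ are broken, and that after $O(\log n)$ iterations the accumulated loads dominate $\Delta$ for vertices not in (or close to) the densest subgraph, allowing the CQT22 invariant to be bootstrapped. Thus the additive error from the early rounds is absorbed by the overall iteration budget, giving \GreedySortingPP\ the same $(1+\eps)$-approximation guarantee as \GreedyPP.
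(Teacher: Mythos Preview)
Your core approach is correct and matches the paper's: both reduce the claim to the observation that CQT22's analysis only needs the peeling order to satisfy an approximate-sorting property with additive slack $\Delta$, and sorting by initial load delivers this. The paper phrases the needed property more crisply: CQT22's Lemma~4.6 goes through for any order $v_1,\dots,v_n$ such that $\ell(v_i) \le \ell(v_j) + \Delta$ whenever $i<j$, where $\ell$ is the load at the \emph{start} of the iteration. Sorting by $\ell$ gives $\ell(v_i) \le \ell(v_j)$ outright, which trivially implies the property; no re-derivation of the potential argument or reconstruction of dual weights is required---one simply plugs the new ordering into CQT22 and is done.

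Your concern about the first few iterations is misplaced and is the one part of the plan you should drop. The property $\ell(v_i) \le \ell(v_j) + \Delta$ holds in \emph{every} iteration, including the first, where all loads are $0$ and any order trivially satisfies $0 \le 0 + \Delta$. There is no bootstrapping step, no need to wait $O(\log n)$ rounds for loads to accumulate, and no separate treatment of early rounds; the CQT22 argument applies verbatim from iteration~1. Removing that paragraph from your proposal yields exactly the paper's (two-sentence) proof.
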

\begin{proof}
The proof follows almost immediately from Section~4 of \cite{CQT22}.
To prove that \GreedyPP~works, they define an exponential-sized linear program where each variable corresponds to one possible peeling order (i.e., a permutation).
They then utilize the multiplicative weight update (MWU) framework on the linear program.\footnote{See, e.g., \cite{arora2012multiplicative} for a survey on this topic.}
By the way the formulate their linear program, the subproblem that we need to solve w.r.t. MWU is to find a good ordering.
Lemma~4.6 of \cite{CQT22} shows that the ordering obtained with \GreedyPP~is a \emph{good approximate ordering}.
The proof of Lemma~4.6 work for any order $v_1, \dots, v_n$ 
that satisfies the following property:
$\ell(v_i) \leq \ell(v_j) + \Delta$ if $i<j$.
This is true for the ordering used in \GreedySortingPP, where we sort by the initial load of the vertices.
Hence, by plugging in this ordering, all of the proofs in \cite{CQT22} go through.
\end{proof}

\subsection{Parallel Implementation}

In this subsection, we present our parallelizations of \GreedyPP~and \GreedySortingPP. We still run both algorithms for $T$ iterations, one iteration at a time, and our aim is to achieve low depth within each iteration.

\begin{algorithm}[!t]\caption{Parallel Density and Load Computation}\label{alg:peel_ordering}
    \SetKwInOut{Input}{Input}
    \SetKwInOut{Output}{Output}
    \SetKwFor{For}{for}{do}{endfor}
    \SetKwFor{ParFor}{parfor}{do}{endparfor}
    \SetKw{KwIn}{in}
    \SetKw{KwLet}{let}
    \DontPrintSemicolon
    
    \Input{an input graph $G=(V,E)$, an ordering $v_1, \ldots v_n$, current loads $\ell(\cdot)$}
    \Output{updated loads $\ell(\cdot)$, best density found $\rho_{max}$}
    \SetAlgoVlined
    $A :=$ array of size $n$\;
    \ParFor{$e = (v_j, v_k)\ \KwIn\ E$}{
        $A[\min (j,k)] \leftarrow A[\min (j,k)]+1$\;
    }
    $B := SuffixSum (A)$\;
    \ParFor{$i = 1 \ \KwTo\ n$}{
        $B[i] \leftarrow B[i] / (n-i+1)$
    }
    $\rho_{max} \leftarrow \max_i B[i]$\tcp*[l]{maximum density in this iteration} 
    \ParFor{$v_i\ \KwIn\ V$}{
        $\ell(v_i) \leftarrow \ell(v_i) + A[i]$ \tcp*[l]{update loads}
    }
    \Return{$\ell, \rho_{max}$}
\end{algorithm}

\myparagraph{Parallelization of~\cref{alg:greedypp}}
We first describe how to parallelize all parts of \cref{alg:greedypp} except for Line 6.
Let $v_1, \ldots, v_n$ be an ordering of vertices for peeling. On the $i$'th iteration of the for-loop on Line 7, the induced subgraph of $G$ that we use is $S_i = G(v_i, \ldots, v_n)$. This holds for all $1 \leq i \leq n$. For any edge $e= (v_j, v_{k})$, 
$e$ will contribute to the density of $S_i$ if and only if $i \leq j$ and $i\leq k$.

Our implementation for computing the densities and updating the loads in parallel is shown in \cref{alg:peel_ordering}.
We first initialize an empty array $A$ of size $n$ (Line 1). Then, for each edge $e =(v_j, v_k)$, we add $1$ to $A[\min (j,k)]$ (Lines 2--3). Let $B$ be the suffix sum array of $A$ (Line 4). Then, $B[i]$ corresponds to the number of edges remaining in the graph after vertices $v_1, \ldots, v_{i-1}$ are peeled. 
To see why it is the case, let us consider the remaining subgraph after $v_1,\ldots, v_{i-1}$ are peeled.
Consider an edge $e = (v_j, v_k)$. Edge $e$ will appear in this subgraph if and only if both $v_j$ and $v_k$ are not yet peeled, i.e., $i \leq j$ and $i\leq k$. We add $1$ to $A[\min(j,k)]$ to account for the presence of $e$ in the subgraphs induced by $v_{i\leq \min(j,k)}, \ldots, v_n$. We then compute the densities in parallel and take the maximum density on Lines 5--7. We update the loads in parallel on Lines 8--9.
The work and the depth of this implementation are $O(n+m)$ and $O(\log n)$, respectively. 
As we described in \cref{sec:prelim}, \defn{ParFor}, \defn{SuffixSum}, and \defn{FindMax} all take linear work.
 \defn{SuffixSum} and \defn{FindMax} have $O(\log n)$ depth, and \defn{ParFor} have $O(1)$ depth, so our depth bound follows.

\myparagraph{\PGreedyPP}
In order to parallelize \GreedyPP, what is left for us is to parallelize the computation of the degeneracy ordering, which can be computed with a $k$-core decomposition algorithm in $O(m+n)$ expected work and $O(c_p\log n)$ depth, with high probability.
When we peel multiple vertices at the same time, the degeneracy ordering can be different from the order obtained sequentially.
The reason is that when a vertex is peeled in the sequential algorithm, it affects its neighbors' degrees immediately. However, in the parallel version, this effect is delayed until the end of the peeling step where multiple vertices may be peeled together.
We claim that this does not significantly affect the order.
Consider a pair of vertices $v_i$ and $v_{j}$. 
To make the proof in \cref{thm:sortingpp_work} go through, it suffices to show that $i<j$ implies $\ell(v_i) \leq \ell(v_j) + \Delta$.
We prove the contrapositive. 
Suppose $\ell(v_i) > \ell(v_j) + \Delta$. Because the number of neighbors of both $v_i$ and $v_j$ are bounded by $\Delta$,
it is the case that, even if all of $v_i$'s neighbors are peeled and none of $v_j$'s neighbors are peeled, $v_i$ will still be peeled after $v_j$, implying that $i>j$.
Therefore, for $i<j$ we have that $\ell(v_i) \leq \ell(v_j) +\Delta$.

\begin{thm}
\label{thm:parallel_greedypp_work}
For $T= \Theta \left( \frac{\Delta \log n}{\rho^* \eps^2} \right )$, our parallel algorithm \PGreedyPP~outputs a $(1+\eps)$-approximation to the densest subgraph problem. Moreover, each iteration takes $O(n+m)$ expected work and $O( c_p \log n)$ depth with high probability.
\end{thm}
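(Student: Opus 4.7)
The plan is to prove the two claims of the theorem separately: first the $(1+\eps)$-approximation guarantee, and then the per-iteration work and depth bounds.

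For the approximation, I would reuse the proof strategy of \cref{thm:sortingpp_work} essentially verbatim, since the MWU-based analysis of Chekuri et al.~\cite{CQT22} only requires the peeling ordering $v_1, \dots, v_n$ to satisfy the ``good approximate ordering'' property: $i < j \implies \ell(v_i) \leq \ell(v_j) + \Delta$. The key observation (already sketched informally in the paragraph preceding the theorem statement) is that the parallel degeneracy ordering, computed via a parallel $k$-core decomposition~\cite{dhulipala2017julienne,dhulipala2018theoretically}, satisfies exactly this relaxed property. I would prove the contrapositive: if $\ell(v_i) > \ell(v_j) + \Delta$, then even if all of $v_i$'s neighbors were peeled before $v_i$ while none of $v_j$'s neighbors were peeled before $v_j$, the induced peeling key of $v_i$ at its removal time would still exceed that of $v_j$, so $v_i$ must appear strictly later in the ordering, contradicting $i < j$. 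Plugging this property into Lemma~4.6 of \cite{CQT22} yields, as in \cref{thm:sortingpp_work}, a $(1+\eps)$-approximation after $T = \Theta(\Delta \log n / (\rho^* \eps^2))$ iterations.

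For the per-iteration cost, I would bound the two components of one iteration. The degeneracy ordering with respect to the current loads is computed by the bucketing-based parallel peeling algorithm of \cite{dhulipala2017julienne,dhulipala2018theoretically}; the paper already states (in Section~3.1) that this takes $O(m+n)$ expected work and $O(c_p \log n)$ depth \whp{}, where $c_p$ is the peeling complexity. Given this ordering, the density evaluation and load updates are performed by \cref{alg:peel_ordering}, which we already analyzed as $O(m+n)$ work and $O(\log n)$ depth by composing \defn{ParFor}, \defn{SuffixSum}, and \defn{FindMax}. Adding the two pieces yields a single iteration in $O(m+n)$ expected work and $O(c_p \log n)$ depth \whp{}, as claimed.

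The only genuinely delicate step is the ordering-property argument above, since one must be careful that the ``delayed'' visibility of peels in the parallel bucketing algorithm does not cause a pair of vertices whose sequential and parallel orders differ to violate the $\ell(v_i) \leq \ell(v_j) + \Delta$ bound. I expect this to be the main obstacle, but the $\Delta$ slack is exactly what buys robustness: any swap between two vertices in the ordering can be attributed to at most $\Delta$ missing degree decrements per vertex, which is already absorbed by the additive $\Delta$ in the good-ordering condition. Everything else is a direct citation of previously established parallel primitives and of the MWU analysis from \cite{CQT22}.
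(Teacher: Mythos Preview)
Your proposal is correct and follows essentially the same approach as the paper: the approximation guarantee is obtained by verifying the good-ordering property $i<j \Rightarrow \ell(v_i)\le \ell(v_j)+\Delta$ via the contrapositive argument you give (which is exactly the argument the paper sketches just before the theorem), and the work/depth bounds are obtained by summing the cost of the bucketing-based parallel $k$-core ordering with that of \cref{alg:peel_ordering}. There is nothing to add or correct.
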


\myparagraph{\PGreedySortingPP}
We 
replace the degeneracy order in \PGreedyPP\ with parallel \defn{IntegerSort} to obtain \PGreedySortingPP.
As discussed in~\cref{sec:prelim}, integer sorting takes 
$O(n \log \log n)$ expected work and $O(\log n)$ depth \whp{}, or  $O(n/\eps)$ work and $O(n^\eps)$ depth for any $0<\eps<1$. We proved earlier that sorting does not affect the approximation guarantee of the algorithm.
Therefore, we have the following theorem.

\begin{thm}
\label{thm:parallel_sortingpp_work}
For $T= \Theta\left( \frac{\Delta \log n}{\rho^* \eps^2} \right )$, our parallel \PGreedySortingPP\ outputs a $(1+\eps)$-approximation to the densest subgraph problem. Moreover, each iteration takes either $O(n\log\log n+m)$ expected work and $O(\log n)$ depth \whp{} or $O(n+m)$ work and $O(n^\eps)$ depth for any $0<\eps<1$.
\end{thm}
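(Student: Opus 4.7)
The plan is to handle the approximation guarantee and the work/depth bounds separately, leveraging what has already been established.

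For the approximation guarantee, I would reduce directly to \cref{thm:sortingpp_work}. The only algorithmic change from the sequential \GreedySortingPP\ is that the ordering on Line 6 of \cref{alg:greedypp} is produced by a parallel \defn{IntegerSort} over the current loads instead of a sequential sort. Since any integer-sort primitive returns the vertices in non-decreasing order of $\ell(\cdot)$, the resulting permutation $v_1,\dots,v_n$ satisfies $\ell(v_i)\le \ell(v_j)$ whenever $i<j$, and in particular $\ell(v_i)\le \ell(v_j)+\Delta$. This is precisely the hypothesis used in the proof of \cref{thm:sortingpp_work} (coming from Lemma~4.6 of \cite{CQT22}), so the MWU-based analysis goes through verbatim and yields a $(1+\eps)$-approximation after $T=\Theta(\Delta\log n/(\rho^*\eps^2))$ iterations. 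Ties in the load values do not matter, since any tie-breaking rule still satisfies the $\Delta$-slack property.

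For the per-iteration cost, I would decompose each iteration into two phases: (i) producing the sorted order, and (ii) computing densities and updating loads. Phase (ii) is exactly \cref{alg:peel_ordering}, which was already shown in the preceding discussion to take $O(n+m)$ work and $O(\log n)$ depth using \defn{ParFor}, \defn{SuffixSum}, and \defn{FindMax}. Phase (i) is a parallel integer sort on $n$ keys bounded by $O(m)$ (the maximum possible load). Using Raman's algorithm~\cite{raman1991} gives $O(n\log\log n)$ expected work and $O(\log n)$ depth \whp{}; using the folklore algorithm of \cite{vishkin2010thinking} gives $O(n/\eps)$ work and $O(n^\eps)$ depth for any constant $0<\eps<1$. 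Adding the two phases yields the two stated cost options ($O(n\log\log n+m)$ expected work with $O(\log n)$ depth \whp, or $O(n+m)$ work with $O(n^\eps)$ depth, absorbing the constant $1/\eps$ into the $O(\cdot)$).

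There is no real obstacle here beyond being careful about two things: first, that the sort-based ordering (as opposed to the peeling-based ordering used in \PGreedyPP) still obeys the $\ell(v_i)\le \ell(v_j)+\Delta$ invariant required by the MWU analysis — this is immediate because exact sorting gives the stronger inequality $\ell(v_i)\le \ell(v_j)$; and second, that the keys being sorted are integers bounded polynomially in $n$, so that both integer-sort variants apply (loads are sums of induced degrees across at most $T$ iterations, hence integer-valued and polynomially bounded). Once these two checks are in place, the theorem follows by composition with \cref{alg:peel_ordering}.
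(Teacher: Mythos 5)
Your proposal is correct and follows essentially the same route as the paper: the approximation guarantee is inherited from \cref{thm:sortingpp_work} because the sorted order trivially satisfies the $\ell(v_i)\le\ell(v_j)+\Delta$ condition needed for the MWU analysis of \cite{CQT22}, and the per-iteration cost is the sum of the \defn{IntegerSort} cost (with the two stated work/depth trade-offs) and the $O(n+m)$ work, $O(\log n)$ depth of the parallel density and load computation in \cref{alg:peel_ordering}. Your added remarks about tie-breaking and the loads being polynomially bounded integers are harmless extra care beyond what the paper records.
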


Note that for our algorithms, the number of steps $T$ depends 
on $\rho^*$, but we can choose $T$ 
based on $k_{max}$ instead, as it is within a factor of $2$ of $\rho^*$.
 
\myparagraph{Combining Refinement with Pruning}
Using the framework in \cref{alg:prune-and-iterate}, we can combine a pruning method with \emph{one iteration} of \GreedyPP, \GreedySortingPP, \PGreedyPP, and \PGreedySortingPP\ as the $Refine$ function.  
For example, we can combine
 approximate $k$-core for pruning with one iteration of \PGreedySortingPP, which would give
an algorithm with either $O(T(n\log\log n+m))$ expected work and $O(T \log n + \log^3n)$ depth or $O(T(n+m))$ work and $O(T\log n +n^\eps)$ depth for any $0<\eps<1$.

\myparagraph{Remark} While pruning gives speedups in practice, it does not 
improve the theoretical complexity of the algorithm, as there exists a graph where the core that we prune down to covers most of the original graph. However, as we observe in our experiments below, pruning results 
in massive improvements in runtime in practice as most real-world graphs exhibit a densest subgraph that is a small percentage of the input (in some graphs, the densest subgraph contains fewer than $1\%$ of the vertices
in the input).

\section{Experiments}
\label{sec:exp}

In this section, we implement and benchmark different instantiations of the Pruning-and-Refining framework 
on real-world datasets. 
We also compare our algorithms with existing algorithms. 
We demonstrate that our approach is practical and is scalable to the largest publicly-available graphs.
In addition, we also provide interesting statistical data on large-scale graphs, in particular, near-optimal densities of the larger graphs, previously not reported to such accuracy in literature. All of our code
is provided at 
\href{https://github.com/PattaraS/gbbs/tree/ALENEX}{this link}.\footnote{\href{https://github.com/PattaraS/gbbs/tree/ALENEX}{https://github.com/PattaraS/gbbs/tree/ALENEX}}

\myparagraph{Implementations}
We implement \GreedyPP, \GreedySortingPP, and their parallel instantiations as our refinement algorithms.
We consider two algorithms for pruning: pruning using exact $k$-cores, and pruning
using approximate $k$-cores. Both pruning algorithms are parallel and are modular across all of the refinement algorithms.
We use the exact $k$-core decomposition algorithm
from Dhulipala \etal~\cite{dhulipala2017julienne} and the
 approximate $k$-core decomposition algorithm from Liu \etal~\cite{LSYDS22} for our pruning step.
    The algorithm of Liu \etal~\cite{LSYDS22} allows us to specify the approximation ratio ($c$ in \cref{def:approx-k-core}). When $c$ is higher, the algorithm tends to be more parallelizable
    but will be less accurate. We run our experiments with $c=1.5$.
 We also combine 
 approximate $k$-core decomposition with exact $k$-core decomposition
 for further speedups. In particular, our combined pruning algorithm
 first uses approximate $k$-core decomposition to shrink the graph
 and then uses exact $k$-core decomposition for greater accuracy in 
 the refine step on the smaller graph. Such a procedure results in speedups for a peeling-based algorithm like \PGreedyPP since in 
 each iteration, the algorithm already performs much of the necessary work (with minimal modification) to find the $k$-core decomposition.
We name our algorithms as follows:
\begin{enumerate}[topsep=1pt,itemsep=0pt,parsep=0pt,leftmargin=15pt]
    \item \texttt{PaRGreedy++}: is \PGreedyPP\ combined with our Pruning-and-Refining framework with exact \kc pruning.
    \item \texttt{PaRSorting++}: is \PGreedySortingPP\ combined with our Pruning-and-Refining framework with exact \kc pruning.
    \item \texttt{\ApproxGP}: is \PGreedyPP\ combined with our Pruning-and-Refining framework with approximate \kc pruning followed by exact \kc pruning.
    \item \texttt{\ApproxGS}: is \PGreedySortingPP\ combined with our Pruning-and-Refining framework with approximate \kc pruning.
\end{enumerate}
We present experimental results for each of our methods and also
compare with existing implementations.

\myparagraph{Existing Algorithms} 
We compare with the state-of-the-art $(1+\eps)$-approximation algorithms from the sequential algorithms of Fang \etal~(\texttt{CoreExact}~\cite{FangYCLL19} and \texttt{CoreApp}~\cite{FangYCLL19}), the sequential algorithm of Boob \etal~(\GreedyPP~\cite{greedyPP}), the parallel algorithm of Harb \etal~(\FISTA~\cite{harb2022faster}, \texttt{Frank-Wolfe}~\cite{DCS17}, and \texttt{MWU}~\cite{BGM14}), and the sequential algorithm of Xu \etal~\cite{xu2023efficient}. We also compare with the state-of-the-art parallel $2$-approximation algorithms of Luo \etal~\cite{luo2023scalable} and Dhulipala \etal~\cite{dhulipala2018theoretically}. 

Fang et al.~\cite{FangYCLL19} and Xu \etal~\cite{xu2023efficient} implement variants of maximum flow algorithms to find the densest subgraph. Their \texttt{CoreExact}~\cite{FangYCLL19} and \ccexact~\cite{xu2023efficient} implementations are exactly the flow algorithm with binary search over the density. 
They perform a density lower-bound estimation using cores and approximate maximum flow, which we described in more detail in \cref{sec:pruning}. %
If a graph has many connected components, then a subgraph with maximum density lies exclusively in one component. Hence, they run the flow algorithm on each connected component separately. 
The densities of cores are then used to determine the lower bounds and upper bounds of the binary search needed for the flow computation. \ccexact~\cite{xu2023efficient} 
improves \texttt{CoreExact}~\cite{FangYCLL19} by using cores to shrink the input graph. 
Their approximation algorithm, \texttt{CoreApp}~\cite{FangYCLL19}
 is an algorithm that finds $core(G,k_{max})$ directly. Once the maximum core is found, they return the component with the highest density. This algorithm yields a $2$-approximation for the densest subgraph problem.
\ccg~\cite{xu2023efficient} is their implementation of \GreedyPP~\cite{greedyPP} that uses one iteration of a $k$-core decomposition algorithm to shrink the input graph.

Harb \etal~\cite{harb2022faster} propose a gradient descent based algorithm called \FISTA~\cite{harb2022faster}, where the number of iterations needed is $O(\sqrt{\Delta m}/\eps)$. 
They use accelerated proximal gradient descent, which is faster than the standard gradient descent approach~\cite{beck2009fast,nesterov1983method}. 
The algorithm runs in iterations, where each iteration can be made parallel.
The output in each iteration is a feasible solution to a linear program for the densest subgraph problem.
They then use \GreedyPP-inspired rounding, which they call fractional peeling, to round the linear program solution into an integral solution.

Finally, we benchmark against 
recent parallel algorithms, \julienne~\cite{dhulipala2018theoretically} and \pkmc~\cite{luo2023scalable}, for computing the exact \kc decomposition. 
Then, the maximum core gives a $2$-approximation of the densest subgraph. 
Although these algorithms achieve parallelism, their approximations are 
worse than the $(1+\eps)$-approximation algorithms, both theoretically
and empirically.

\myparagraph{Setup}
We use \texttt{c2-standard-60} Google Cloud instances (3.1 GHz
Intel Xeon Cascade Lake CPUs with a total of 30 cores with two-way
hyper-threading, and 236 GiB RAM) and \texttt{m1-megamem-96} Google
Cloud instances (2.0 GHz Intel Xeon Skylake CPUs with a total of
48 cores with two-way hyper-threading, and 1433.6 GB RAM). We
use hyper-threading in our parallel experiments by default. Our
programs are written in C++. We use parallel primitives from the \texttt{GBBS}~\cite{dhulipala2017julienne} and 
\texttt{Parlay}~\cite{blellochParlay} libraries.
The source code is compiled using g++ (version 10) with the -O3 flag. We
terminate experiments that take over 1 hour. %
We run each experiment for three times and take the average
for the runtime and accuracy analyses.
Using enough threads, with our framework, \GreedyPP, \GreedySortingPP, and their parallel versions finished within 1 hour for all of our experiments. However, all other $(1+\eps)$-approximation algorithms took longer than 1 hour on all of the large graphs (\texttt{clueweb}, \texttt{twitter}, \texttt{friendster}, and \texttt{hyperlink2012}), so we omit the entries for these datasets for these
algorithms.

\myparagraph{Datasets}
We run our experiments on various synthetic and real-world datasets. The real-world datasets are obtained from SNAP~\cite{leskovec2014snap} (\texttt{cahepth}, 
\texttt{ascaida}, \texttt{hepph}, \texttt{dblp}, \texttt{wiki}, \texttt{youtube}, \texttt{stackoverflow}, \texttt{livejournal}, \texttt{orkut}, \texttt{twitter}, and \texttt{friendster}), Network Repository~\cite{networkrepo} (\texttt{brain}), Lemur project at CMU~\cite{boldi2004webgraph} (\texttt{clueweb}), and WebDataCommons~\cite{Meusel2015TheGS} (\texttt{hyperlink2012}).
The \texttt{hyperlink2012} graph is the largest publicly-available real-world graph today.
\texttt{closecliques} is a synthetic dataset designed to be challenging for \GreedyPP~\cite{greedyPP,CQT22,harb2022faster}. 
We remove self-loop and zero-degree vertices from all graphs, and symmetrize  any directed graphs.
We run most of our experiments on \texttt{c2-standard-60} machines.
However, on the larger graphs (namely, \texttt{twitter}, \texttt{friendster}, \texttt{clueweb},
and \texttt{hyperlink2012}), we use \texttt{m1-megamem-96} machines as more memory is required. 
The sizes of our inputs and their maximum core values are included in~\cref{tab:graph_datasets}.

\myparagraph{Overview of Results}
We show the following experimental results in this section.

\begin{itemize}[topsep=0pt,itemsep=0pt,parsep=0pt,leftmargin=8pt]

\item Our pruning strategy is very efficient in practice as our pruned
graph contains $175\times$ fewer edges
on average and $3,596\times$ fewer vertices on average.
\item Our algorithms, similar to the state of the art, take only a few iterations to converge. \PARSortingPP takes more iterations, but it still converges to $<1.01$-approximation within $10$--$20$ iterations and each
iteration is significantly faster than all other algorithms.
\item Our algorithms are faster than existing algorithms by a large margin.
\item Our algorithms are highly parallelizable, achieving up to $22.37\times$ self-relative parallel speedup on a $30$-core machine with two-way hyperthreading. 
\item We measure empirical ``width'', which is a parameter that correlates to the number of iterations needed to converge. We observe that the empirical width is much smaller than the upper bound used to analyze the algorithm. This may lead to more fine-grained analyses of many MWU-inspired algorithms.

\end{itemize}

\subsection{Core-Based Pruning}
In this section, we present experimental results related to various different
pruning methods using exact and approximate $k$-core decomposition (and combinations thereof).

\myparagraph{Pruning with $core(G,\lceil \frac{k_{max}}{2}\rceil)$} 
We first study the benefit of performing pruning using the {\em exact} $k$-core computation.
The data for this experiment across all graphs is shown in~\cref{tab:graph_datasets}. 
For the real-world graphs, the cores contain between 2--282$\times$ fewer edges than the actual graphs ($48.3\times$ fewer on average), and between 4.5--14227$\times$ fewer vertices ($2420\times$ fewer on average).
The only exception is the \texttt{brain} dataset, where the core is half the size of the actual graph.
Even in this case, the number of vertices left in the core is around $25\%$ of the original graph.
For the synthetic dataset \texttt{closecliques}, the input is designed so that the maximum-core is identical to the original graph, so there is no benefit to pruning.
However, this situation is very unlikely to occur in real-world datasets. 

Due to the significant reduction in graph sizes in terms of both the number of vertices and number of edges,
using $core(G,\lceil \frac{k_{max}}{2}\rceil)$ is almost always preferable over using $G$, especially since computing all cores of $G$ (a linear-work algorithm, with reasonably high parallelism in practice~\cite{dhulipala2017julienne}) is inexpensive compared to the cost of running any of the refinement algorithms, which mostly require super-linear work.
To summarize, we find that pruning is nearly always beneficial and should be applied prior to refinement.

\myparagraph{Pruning with Highest Cores} 
As our algorithms progress, we perform additional pruning to shrink the graph even further. 
We report the sizes of the final graphs in~\cref{app:sec:tables} (\cref{tab:graph_datasets_max_den}). 
In many cases, the sizes of the final graph (after pruning to the highest cores)
are less than half of the sizes of their $\lceil\frac{k_{max}}{2}\rceil$-cores. 
Across all datasets,
we find that iterative pruning yields up to a $30.3\times$ reduction in the number of vertices over the $core(G, \ceil{k_{max}/2})$, and up to a $200\times$ reduction in the number of edges when comparing these quantities in $core(G,\lceil k_{max}/2 \rceil)$ and $core(G, \lceil 
{\tilde{\rho }} \rceil)$, where ${\tilde{\rho}}$ is the best density found by our algorithms. Thus, we see advantages in performing multiple rounds of refinement
in certain graphs.
Note that, there are cases when this additional pruning step is not helpful, e.g., in \texttt{dblp} and \texttt{clueweb}. In each of these cases, we notice that the best density found is very close to $\ceil{k_{max}/2}$, so there is no room for pruning opportunities.

\subsection{Number of Iterations Versus Density}

Next, we study the progress that different refinement algorithms make in our framework vs.\ other implementations toward finding the maximum density.
We perform this experiment on all variants of our algorithms: \PARGreedyPP{}, \PARSortingPP{}, \ApproxGP, and \ApproxGS; 
and all other benchmarks that use iterations:
\FISTA, \GreedyPP, \texttt{FrankWolfe}, \texttt{MWU}, and \ccg. We also 
include \pkmc as a baseline of comparison against a $2$-approximation algorithm. 
All algorithms were run for at least 20 iterations. 
The results are illustrated in \cref{fig:plot_den_iter}.
In fact, most algorithms converge very early with our algorithms \PARGreedyPP and \ApproxGP converging no later
than the fastest converging algorithms. In fact, on most graphs,
\GreedyPP, \PARGreedyPP, and \PARSortingPP\ took the fewest iterations to converge.
Two algorithms, \PARSortingPP\ and \texttt{MWU}, take more iterations in many graphs.
This matches with our understanding of the \emph{width} of MWU as discussed below. 
Furthermore, \ApproxGP and \ApproxGS exactly match the convergence
rates of \PARGreedyPP{} and \PARSortingPP{}, respectively; such is expected
as our use of approximate vs.\ exact pruning affects the runtime \emph{not} the 
accuracy. 
\ApproxGS needs more iteration to converge since we only use approximate \kc pruning.

\subsection{Approximation Ratio}

In \cref{tab:exp_approx_10}, 
we compare the densities returned from various algorithms at iteration $10$ with the best density currently known in the literature. Except for \texttt{brain}, \texttt{twitter}, \texttt{friendster}, \texttt{clueweb}, and \texttt{hyperlink2012}, the best known density is equal to the optimum. To compute the optimum density, we run a linear program solver on $core(G, \lceil \tilde \rho \rceil)$.
    
    \begin{table*}[h!]
    \begin{center}
    \footnotesize
    \begin{tabular}{l||c|r|r|r|r|r|r|r}
    
    \toprule
 Graph Dataset & $\tilde \rho$ & \texttt{FISTA} & \texttt{MWU} & \texttt{FrankWolfe} & \texttt{Greedy++} & \texttt{PaRGreedy++} & \texttt{PaRSorting++} & \texttt{\ApproxGS}\\
 \midrule
\texttt{hepph*}  & 265.969 & 1.00043 & 1.00011 & 1.00011 & 1 & 1 & 1.00031 & 1.00001 \\
\texttt{dblp*}  & 56.565 & 1 & 1 & 1 & 1 & 1 & 1 & 1 \\
\texttt{brain}  & 1057.458 & 1.00011 & 1.00005 & 1.00031 & 1 & 1 & 1.00026 & 1.0001 \\
\texttt{wiki*}  & 108.59 & 1 & 1 & 1.00723 & 1 & 1 & 1.00002 & 1 \\
\texttt{youtube*}  & 45.599 & 1.00023 & 1.00104 & 1.01522 & 1.00007 & 1 & 1.00079 & 1.00063 \\
\texttt{stackoverflow*}  & 181.587 & 1 & 1.00001 & 1.0031 & 1 & 1 & 1.00002 & 1.00001 \\
\texttt{livejournal*}  & 229.846 & 1.00113 & 1.00003 & 1.03671 & 1 & 1 & 1.00019 & 1.00006 \\
\texttt{orkut*}  & 227.874 & 1 & 1.00011 & 1.00123 & 1 & 1 & 1.00026 & 1.00026 \\
\texttt{twitter}  & 1643.301 & n/a & n/a & n/a & 1 & 1 & 1.00003 & 1.00006 \\
\texttt{friendster}  & 273.519 & n/a & n/a & n/a & n/a & 1 & 1 & 1.00006 \\
\texttt{clueweb}  & 2122.5 & n/a & n/a & n/a & n/a & 1 & 1 & 1 \\
\texttt{hyperlink2012}  & 6496.649 & n/a & n/a & n/a & n/a & 1 & 1 & 1.00002 \\
    \bottomrule
    \end{tabular}
    \end{center}
    \vspace{-5pt}
    \caption{Approximation Ratio at the 20th iteration for various algorithms. $\tilde{\rho}$ is the best densest subgraph known in the literature. Ratios are computed as the best currently known density ($\tilde \rho$) divided by the density produced by the respective algorithm. Results are indicated as n/a if the corresponding algorithms timeout at $1$ hour. Graphs indicated with an $*$ 
    have optimum computed densities.
    $\tilde{\rho}$ is rounded to 3 decimal places and approximation ratios are rounded to 5 decimal places.
    \ApproxGP is omitted since the ratios are identical to {\PARGreedyPP}.}
    \label{tab:exp_approx_20}
    \end{table*}

Except for \texttt{FrankWolfe}, all algorithms have approximation ratios less than $1.02$ after 10 iterations. Our \PARGreedyPP{} algorithm achieves the best 
approximation ratio after $10$ iterations for all four of the largest graphs,
\texttt{twitter}, \texttt{friendster}, \texttt{clueweb}, and \texttt{hyperlink2012}.
For the rest of the graphs, our algorithm achieves an approximation ratio
no worse than $1.0001\times$ the smallest approximation ratio.
We also include a table that compares densities after iteration 20 in~\cref{tab:exp_approx_20}.
After 20 iterations, most approximation ratios are less than $1.001$.
Our algorithm \PARGreedyPP{} obtains the best approximation for $8$ out of the 
$12$ tested graphs and obtains an approximation ratio no worse than 
$1.0000002\times$ the best for the remaining graphs.

\subsection{Empirical Widths}

As mentioned at the end of \cref{sec:algo}, in the multiplicative weight update framework, \defn{width} is a parameter that is correlated with
the number of rounds needed for a solution to converge.
In our context, the width $\omega$ corresponds to the {\em maximum increase of a load} of a single vertex in any iteration. See, e.g., \cite{arora2012multiplicative,CQT22} for more details on width and its analysis.
$\omega$ is lower bounded by
$\rho^*$ because 
when we peel the first vertex from the densest subgraph, its degree must be at least $\rho^*$.
The width is also upper bounded by
the maximum degree $\Delta$, since the increase of a load of a vertex is bounded by its degree, i.e., $\rho^* \leq \omega \leq \Delta$.
This upper bound is reflected in the
$T=O\left(\frac{\Delta \log n}{\rho^* \eps^2}\right)$ iterations needed for our algorithms in 
the worst case.
However, this bound does not reflect reality, as most of our
iterative algorithms usually converges in just a few iterations. 
The bound on the number of iterations could have been $T=O(\frac{\omega \log n}{\rho^* \eps^2})$. This can be significant if $\omega \ll \Delta$.
Here, we partially explain this phenomenon by measuring the width empirically.
In~\cref{tab:exp_width}, we report
the width across multiple datasets gathered from our experiments. 
We observe that the widths for running \PARGreedyPP\ are much closer to the best density found,
while the widths from \PARSortingPP\ are closer to $\Delta$.
If we see empirically that $\omega = O(\rho^*)$, then our algorithms should converge in $T = O\left(\frac{\log n}{ \eps^2}\right)$
iterations.
On the other hand,
if $\omega \gg \rho^*$, our algorithms should take more iterations to converge. 
This supports what we observed in our experiments, and also explains why it takes very few iterations, e.g., fewer than 10--20 iterations, for \PARGreedyPP\ to converge.

\begin{figure*}[htp!]
\centering
\includegraphics[width=\textwidth]{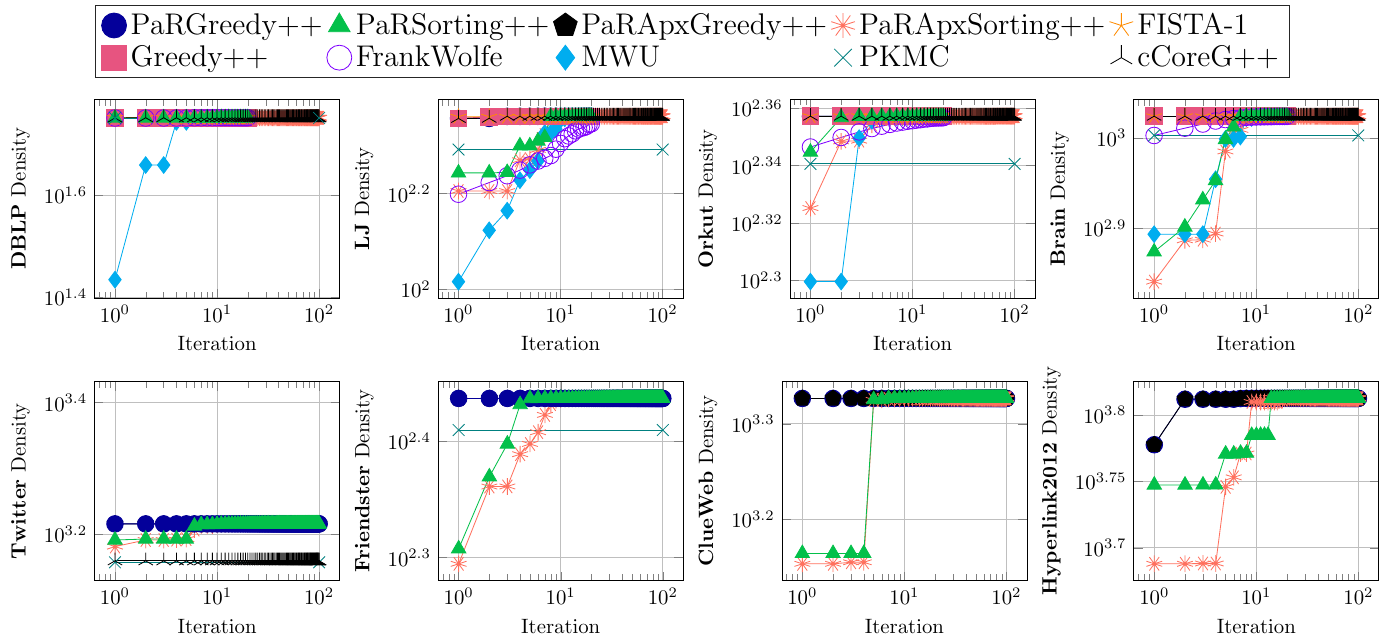}
\caption{Densities on different iterations for various algorithms. Only our algorithms can successfully process all of the large graphs (bottom row) within the $1$ hour limit. 
}\label{fig:plot_den_iter}
\end{figure*}

\begin{figure*}[htp!]
\centering
\includegraphics[width=\textwidth]{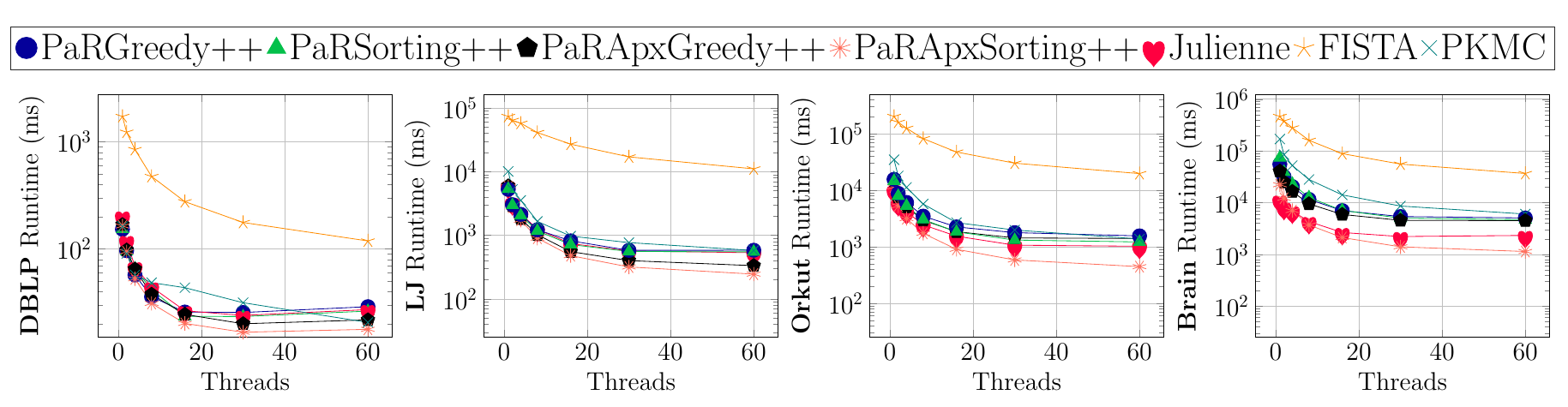}
\caption{Runtimes (ms) of \PARGreedyPP, \PARSortingPP, \ApproxGP, \ApproxGS, \julienne, \FISTA, and \pkmc versus the number of threads when running for $5$ iterations.
}
\label{fig:threads-5}
\end{figure*}

\begin{figure*}[htp!]
\begin{tikzpicture}
\begin{axis}[
	ylabel=Runtime (ms),
    ymode=log,
    bar width=1.5pt,
    ybar,
    ymax=50000000,
    clip=false,
    x tick label style={rotate=15},
    symbolic x coords={dblp, orkut, livejournal, stackoverflow, brain, wiki, youtube, hepph},
    xtick=data,
	enlargelimits=0.05,
    legend pos = north west,
    height=4cm,
    width=\textwidth,
    log origin=infty,
    legend columns=6,
   legend style={font=\scriptsize, at={(0.5, 1.5)},anchor=north}],
]
\addplot[fill=mydarkblue,postaction={pattern=grid}] %
	coordinates { %
        (brain,5814.427)
        (orkut,2030.765)
        (livejournal,730.386)
        (dblp,37.628)
        (stackoverflow,665.221)
        (wiki,128.375)
        (youtube,157.506)
        (hepph,594.574)
};
\addplot[fill=mygreen, postaction={pattern=dots}] %
	coordinates {
(brain,5661.566)
(orkut,1317.772)
(livejournal,579.395)
(dblp,31.173)
(stackoverflow,382.956)
(wiki,74.159)
(youtube,88.667)
(hepph,182.72)
};

\addplot[fill=ballblue,postaction={pattern=grid}] %
	coordinates { %
        (brain,6172.202)
        (orkut,2309.927)
        (livejournal,527.64)
        (dblp,31.85)
        (stackoverflow,1209.18)
        (wiki,140.924)
        (youtube,204.371)
        (hepph,1813.338)
};
\addplot[fill=bittersweet, postaction={pattern=dots}] %
	coordinates {
(brain,1172.47)
(dblp,18.436)
(hepph,31.912)
(livejournal,249.894)
(orkut,467.612)
(stackoverflow,166.007)
(wiki,30.346)
(youtube,41.472)
};

\addplot[fill=mydarkorange, postaction={pattern=horizontal lines}] %
	coordinates {
        (wiki,7540)
(dblp,492)
(stackoverflow,33730)
(livejournal,43986)
(orkut,76967)
(brain,138632)
(hepph,782)
(youtube,6386)
};
\addplot[fill=mydarkpurple, postaction={pattern = vertical lines}] %
	coordinates {
    (dblp,2637)
    (hepph,3527)
    (livejournal,182150)
    (orkut,466107)
    (stackoverflow,112867)
    (wiki,7390)
    (youtube,12160)
    (brain,1103504)
};
\addplot[fill=mypurple,postaction={pattern=dots}] %
	coordinates {
        (wiki,2612)
(dblp,1399)
(stackoverflow,41041)
(livejournal,69756)
(orkut,157189)
(brain,336626)
(hepph,1773)
(youtube,5509)
};
\addplot[fill=cyan,postaction={pattern=north west lines}] %
	coordinates {
        (wiki,4161) 
        (dblp,1334)
        (stackoverflow,20175)
        (livejournal,40709)
        (orkut,46364)
        (brain,44776)
        (hepph,272)
        (youtube,5088)
};

\addplot[fill=mydarkpink,postaction={pattern=dots}] %
    coordinates {
		(brain, 12503.7)
		(dblp, 224.396)
		(hepph, 329.461)
		(livejournal, 8435.12)
		(orkut, 13634.4)
		(stackoverflow, 4458.39)
		(wiki, 935.256)
		(youtube, 1135.52)
};

\addplot[fill=myyellow,postaction={pattern=north west lines}] %
    coordinates {
(brain, 943744)
(dblp, 319.661)
(hepph, 27682.8)
(livejournal, 18791.4)
(orkut, 244410)
(stackoverflow, 30408.2)
(wiki, 2804.46)
(youtube, 2739.82)
};

\addplot[fill=mydarkred,postaction={pattern=grid}] %
    coordinates {
		(brain, 6040.09)
		(dblp, 20.65)
		(hepph, 65.99)
		(livejournal, 577.748)
		(orkut, 1373.89)
		(stackoverflow, 365.107)
		(wiki, 153.929)
		(youtube, 110.132)
};

\legend{PaRGreedy++, PaRSorting++,PaRApxGreedy++,PaRApxSorting++, FISTA, Greedy++, FrankWolfe, MWU, cCoreG++, cCoreExact, PKMC}
\end{axis}
\end{tikzpicture}
\caption{Runtimes of different densest subgraph algorithms on our small graph inputs. The algorithms are run for $20$ iterations. Parallel algorithms use $60$ hyper-threads. %
}\label{fig:all-graphs}
\end{figure*}
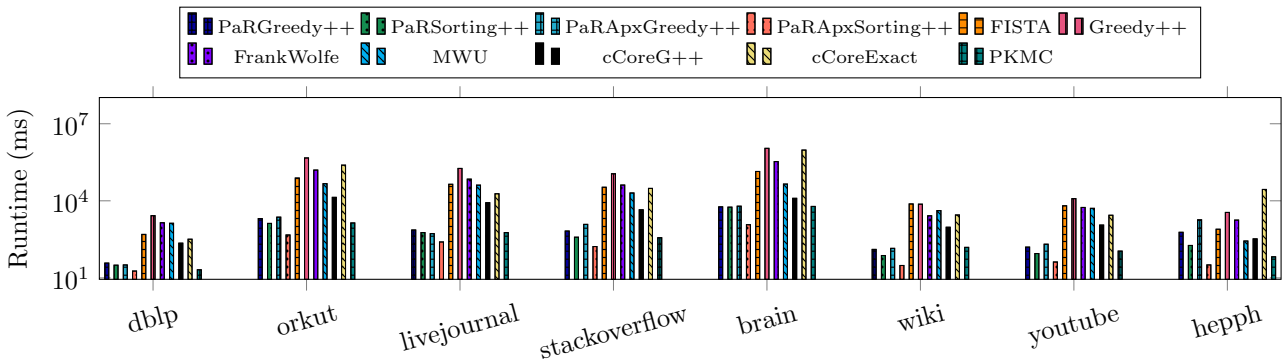

\subsection{Scalability}

Here, we show the scalability of our algorithms compared to the parallel version of FISTA and the parallel $2$-approximation algorithms, \julienne and \pkmc
in~\cref{fig:threads-5}, which shows the running time of the algorithms (in milliseconds)
versus the number of threads used by our algorithms and parallel \texttt{FISTA}, $\julienne$, and $\pkmc$ when each algorithm is run for $5$ iterations.
We show additional plots for $10$ and $20$ iterations in~\cref{fig:threads-10} and \cref{fig:threads-20}.
We see that our \texttt{PaRSorting++} algorithm
achieves greater self-relative
speedups than \texttt{FISTA} and \texttt{PaRGreedy++}. 
Specifically, \texttt{PaRSorting++} achieves up to a $10.6\times$ self-relative speedup (on \texttt{livejournal}),
while \texttt{FISTA} achieves up to a $14\times$ self-relative speedup (on \texttt{dblp}) and \texttt{PaRGreedy++} achieves up to a 
$5.51\times$ self-relative speedup (on \texttt{orkut}). Furthermore, both of our algorithms take shorter time than parallel \texttt{FISTA}
regardless of the number of threads. Our \ApproxGP and \ApproxGS achieve
the greatest self-relative speedup. Specifically, On \texttt{livejournal}, \ApproxGP and \ApproxGS achieves 
up to a $17.6\times$ and a $20.51\times$ self-relative speedup, respectively.
\ApproxGS achieves greater
self-relative speedup than the rest of the implementations for $8$ of the 
$12$ tested graphs. Although \pkmc achieves greater self-relative
speedups on \texttt{dblp}, \texttt{hepph}, \texttt{stackoverflow}, and \texttt{friendster}, they obtain worse approximations guarantees since they
only guarantee a $2$-approximation on the density. As we discussed 
previously, it is much easier to obtain greater parallelism 
when the approximation guarantee is relaxed to a $2$-approximation (we see in~\cref{fig:plot_den_iter} that \pkmc obtains 
noticeably worse approximations on most graphs).

\subsection{Comparing the Total Running Time}

We first compare \PARSortingPP\ with the algorithms given in~\cite{FangYCLL19} (see~\cref{table:vldb_compare}).
We ran experiments on two of their algorithms, namely, \texttt{CoreExact} and \texttt{CoreApp}.
\texttt{CoreExact} took too long to run on most datasets.
\texttt{CoreApp} is faster than our implementation on some graphs, however, since it uses $core(G,k_{max})$, this algorithm gives a $2$-approximation and is less accurate than our algorithms. 

We include plots that compare the total runtime of our algorithms (\PARGreedyPP, \PARSortingPP, \ApproxGP, \ApproxGS) with \GreedyPP~\cite{greedyPP}, \FISTA, \texttt{FrankWolfe}, \pkmc, \ccg, \ccexact, and \texttt{MWU}~\cite{harb2022faster} in~\cref{fig:all-graphs}. %
In short, when measuring the quality of our solutions in running time, our algorithms outperform all existing algorithms by significant margins, and is up to $25.9\times$ faster than the fastest $(1+\eps)$-approximation algorithm for each graph. 
On many of the graphs that we tested, we achieve a $2\times$ improvement in runtime when using 
approximate \kc compared to when we use exact.

\myparagraph{Large Graph Runtime and Accuracy Results} Even when using multi-threading, 
our algorithms are the only algorithms to finish processing the large graphs (\texttt{twitter}, 
\texttt{friendster}, \texttt{clueweb}, and \texttt{hyperlink2012}) within $1$ hour. 
Specifically, for $20$ iterations and $60$ threads, our fastest algorithms \PARGreedyPP{} 
and \PARSortingPP{} require 
$10.44$, $15.35$, $112.64$, and $352.65$ seconds, and $8.41$, $10.54$, $83.91$, and $270.39$
seconds on \texttt{twitter}, \texttt{friendster}, \texttt{clueweb}, and \texttt{hyperlink2012}, respectively.
When using approximate \kc, \ApproxGP{} and \ApproxGS{} require
$9.77$, $15.15$, $120.97$, and $375.71$ seconds, and $8.27$, $8.62$, $85.29$, and $287.27$ 
seconds on \texttt{twitter}, \texttt{friendster}, \texttt{clueweb}, and \texttt{hyperlink2012}, respectively.
Moreover, we obtain the best densities known in literature for these graphs (shown in~\cref{tab:exp_approx_20}). 
For massive real-world graphs, our experiments show that using approximate \kc 
does not yield much benefit. 
This is because larger graphs lead to more parallelism in the exact $k$-core decomposition algorithm,
so the exact $k$-core  algorithm exhibits similar parallelism 
to the approximate $k$-core algorithm on a $30$-core machine. 

\myparagraph{Initialization time} 
We measure the initialization time of our algorithms (i.e., the time to perform the \kc decomposition step), and report the results in~\cref{tab:graph_inittime_thread_algos}.
The finding is that a significant portion of total runtime is on this initialization step for all of the graphs. 
This makes sense as the initial graph tends to be much larger than the graph we obtain after pruning.
To illustrate this point, for \texttt{brain}, when running on $60$ threads, the initialization step takes at least $2655$ ms for \PARGreedyPP, \PARSortingPP, and \ApproxGP, and $1074$ ms for \ApproxGS. The average time spent on each iteration for these algorithms are $138$, $3$, $148$, and $2.3$ ms, respectively.
\ApproxGS is still faster than all other algorithms despite running for $300$ more iterations.

\section{Conclusion}\label{sec:conclusion}
We introduced a framework that combines pruning and refinement for solving the approximate densest subgraph problem. 
We designed new parallel variants of the 
sequential \GreedyPP\ algorithm, and achieved state-of-the-art performance by plugging them into our framework. 
We showed that our algorithms can scale to the large \texttt{hyperlink2012} and \texttt{clueweb} graphs and obtain near-optimal approximations
of their densest subgraphs for the first time in the literature.

\section*{Acknowledgements}
This research was supported by DOE
Early Career Award \#DESC0018947, NSF Awards
\#CCF-1845763 and \#CCF-2103483, Google Faculty Research Award, Google Research Scholar Award, cloud computing credits from Google-MIT, and FinTech@CSAIL Initiative.

\clearpage
\bibliographystyle{alpha}
\bibliography{ref}

\clearpage
\onecolumn
\appendix
\section{Additional Tables from Section~\ref{sec:exp}}
\label{app:sec:tables}
        \begin{table*}[h!]
    \begin{center}
    \footnotesize
    \begin{tabular}{l||rr|r|rr|cc}
    
    \toprule
    & \multicolumn{2}{c|}{Original Graph}&& \multicolumn{2}{c|}{$core(G,\lceil \frac{k_{max}}{2} \rceil)$}\\

    Graph Dataset & Num. Vertices & Num. Edges & $k_{max}$ & Num. Vertices & Num. Edges & Vertex Ratio & Edge Ratio\\
    
    \midrule
        \texttt{closecliques}  & 3,230 & 95,400 & 59 & 3,230 & 95,400 & 1.000 & 1.000\\
        \texttt{cahepth}  & 9,877 & 25,973 & 31 & 96 & 2,306 & 0.0097 & 0.088\\
        \texttt{ascaida}  & 26,475 & 106,762 & 22 & 208 & 6,244 & 0.007 & 0.048\\
        \texttt{hepph} & 28,094 & 3,148,447 & 410 & 6,304 & 1,562,818 & 0.224 & 0.494 \\
        \texttt{dblp}  & 317,080 & 1,049,866 & 101 & 280 & 13,609 & 0.001 & 0.010\\
        \texttt{brain}  & 784,262 & 267,844,669 & 1,200 & 187,494 & 137,354,946 & 0.239 & 0.512\\
        \texttt{wiki}  & 1,094,018 & 2,787,967 & 124 & 3,807 & 344,553 & 0.034 & 0.090\\
        \texttt{youtube}  & 1,138,499 & 2,990,443 & 51 & 12,836 & 439,678 & 0.011 & 0.110\\
        \texttt{stackoverflow}  & 2,584,164 & 28,183,518 & 163 & 41,651 & 5,709,796 & 0.016 & 0.187\\
        \texttt{livejournal}  & 4,846,609 & 42,851,237 & 329 & 6,090 & 1,054,941 & 0.001 & 0.022\\
        \texttt{orkut}  & 3,072,441 & 117,185,083 & 253 & 71,507 & 13,469,722 & 0.023 & 0.113\\
        \texttt{twitter}  & 41,652,230 & 1,202,513,046 & 2,484 & 24,480 & 36,136,023 & 0.001 & 0.029\\
        \texttt{friendster}  & 65,608,366 & 1,806,067,135 & 304 & 1,474,236 & 271,902,207 & 0.022 & 0.146\\
        \texttt{clueweb}  & 978,408,098 & 37,372,179,311 & 4,244 & 91,874 & 132,549,663 & 9.39e-05 & 0.003\\
        \texttt{hyperlink2012}  & 3,563,602,789 & 112,920,331,616 & 10,565 & 250,477 & 1,046,929,322 & 7.02e-05 & 0.009\\
        
    \bottomrule
    \end{tabular}
    \end{center}
    \caption{Graph sizes, their maximum core values ($k_{max}$), their $\lceil k_{max}/2\rceil$-core sizes, and their vertex (edge) ratios, which is the number of vertices (edges) in $core(G,\lceil \frac{k_{max}}{2} \rceil)$ divided by the number of vertices (edges) in $G$.}
    
    \label{tab:graph_datasets}
    \end{table*}

    \begin{table*}[h!]
    \begin{center}
    \footnotesize
    \begin{tabular}{l||r|rr|r|rr|cc}
    
    \toprule
    & &\multicolumn{2}{c|}{$core(G,\lceil \frac{k_{max}}{2} \rceil)$}&& \multicolumn{2}{c|}{$core(G,\lceil \tilde \rho \rceil )$}\\
    
Graph Dataset & $k_{max}$ & Num. Vertices & Num. Edges & $\tilde \rho$ & Num. Vertices & Num. Edges & Vertex Ratio & Edge Ratio \\
\midrule
\texttt{closecliques} & 59 & 3,230 & 95,400 & 29.55665 & 3,230 & 95,400 & 1.000 & 1.000\\
\texttt{cahepth} & 31 & 96 & 2,306 & 15.5 & 96 & 2,306 & 1.000 & 1.000\\
\texttt{ascaida} & 22 & 208 & 6,244 & 17.53409 & 90 & 1,578 & 0.432 & 0.259\\
\texttt{hepph} & 410 & 6304 & 1,562,818 & 265.969 & 3,786 & 988,734 & 0.600 & 0.633 \\ 
\texttt{dblp} & 101 & 280 & 13,609 & 56.56522 & 280 & 13,609 & 1.000 & 1.000\\
\texttt{brain} & 1,200 & 187,494 & 137,354,946 & 1,057.458 & 8,993 & 9,509,717 & 0.048 & 0.069\\
\texttt{wiki} & 124 & 3,807 & 344,553 & 108.5877 & 1,379 & 149,739 & 0.362 & 0.434\\
\texttt{youtube} & 51 & 12,836 & 439,678 & 45.59877 & 2,269 & 103,342 & 0.176 & 0.233\\
\texttt{stackoverflow} & 163 & 41,651 & 5,709,796 & 181.5867 & 5,877 & 1,067,149 & 0.141 & 0.187\\
\texttt{livejournal} & 329 & 6,090 & 1,054,941 & 229.8459 & 3,393 & 610,864 & 0.557 & 0.579\\
\texttt{orkut} & 253 & 71,507 & 13,469,722 & 227.874 & 26,670 & 6,077,055 & 0.372 & 0.451\\
\texttt{twitter} & 2,484 & 24,480 & 36,136,023 & 1643.301 & 11,619 & 17,996,107 & 0.474 & 0.498\\
\texttt{friendster} & 304 & 1,474,236 & 271,902,207 & 273.5187 & 49,370 & 1,353,583 & 0.033 & 0.005\\
\texttt{clueweb} & 4,244 & 91,874 & 132,549,663 & 2122.5 & 91,874 & 132,549,663 & 1.000 & 1.000\\
\texttt{hyperlink2012} & 10,565 & 250,477 & 1,046,929,322 & 6,496.649 & 154,410 & 754,065,464 & 0.616 & 0.720\\

    \bottomrule
    \end{tabular}
    \end{center}
    \caption{Comparison between $core(G,\lceil k_{max}/2 \rceil)$ and $core(G, \lceil 
{\tilde{\rho }} \rceil)$, where ${\tilde{\rho}}$ is best density found by our algorithms. The vertex (edge) ratio is the ratio of vertices (edges) in $core(G, \lceil \tilde{\rho} \rceil)$ to the number of vertices (edges) in $core(G, \lceil k_{max}/2\rceil)$.}
    \label{tab:graph_datasets_max_den}
    \end{table*}

        \begin{table*}[h!]
    \begin{center}
    \footnotesize
    \begin{tabular}{l||r|r|r|r|r|r|r|r}
    
    \toprule
 Graph Dataset & $\tilde \rho$ & \texttt{FISTA} & \texttt{MWU} & \texttt{FrankWolfe} & \texttt{Greedy++} & \texttt{PaRGreedy++} & \texttt{PaRSorting++} & \texttt{\ApproxGS}\\
 \midrule
\texttt{hepph*}  & 265.969   & 1.01471 & 1.00024 & 1.00033 & 1.00007 & 1.00011 & 1.00271 & 1.00018 \\
\texttt{dblp*}  & 56.565   & 1 & 1 & 1 & 1 & 1 & 1 & 1 \\
\texttt{brain}  & 1057.458   & 1.00011 & 1.00161 & 1.00185 & 1 & 1 & 1.0022 & 1.00186 \\
\texttt{wiki*}  & 108.59  & 1 & 1.00001 & 1.02329 & 1 & 1 & 1.00009 & 1.00001 \\
\texttt{youtube*}  & 45.599   & 1.00023 & 1.00379 & 1.04566 & 1.0002 & 1.00019 & 1.00171 & 1.00151 \\
\texttt{stackoverflow*}  & 181.587   & 1 & 1.00006 & 1.01409 & 1 & 1 & 1.00009 & 1.00011 \\
\texttt{livejournal*}  & 229.846   & 1.00242 & 1.01705 & 1.13958 & 1 & 1 & 1.023 & 1.00038 \\
\texttt{orkut*}  & 227.874   & 1 & 1.00035 & 1.00359 & 1 & 1 & 1.00033 & 1.00113 \\
\texttt{twitter}  & 1643.301   & n/a & n/a & n/a & 1 & 1 & 1.00268 & 1.00216 \\
\texttt{friendster}  & 273.518   & n/a & n/a & n/a & n/a & 1 & 1 & 1.00191 \\
\texttt{clueweb}  & 2122.5   & n/a & n/a & n/a & n/a & 1 & 1 & 1.00071 \\
\texttt{hyperlink2012}  & 6496.649   & n/a & n/a & n/a & n/a & 1 & 1.0667 & 1.00631 \\

    \bottomrule
    \end{tabular}
    \end{center}
    \caption{Approximation ratio at the 10th iteration for various algorithms. $\tilde{\rho}$ is the best densest subgraph known in the literature. Ratios are computed as the best currently known density ($\tilde \rho$) divided by the density produced by the respective algorithm. Results are indicated as n/a if the corresponding algorithms timeout at $1$ hour. Graphs indicated with an $*$ 
    have optimum computed densities.
    $\tilde{\rho}$ is rounded to 3 decimal places and approximation ratios are rounded to 5 decimal places.
    \ApproxGP is omitted since their approximation ratios are identical to those of {\PARGreedyPP}.}
    \label{tab:exp_approx_10}
    \end{table*}

        \begin{table*}[h!]
    \begin{center}
    \footnotesize
    \begin{tabular}{l||r|r|rrr|rr|rr}
    
    \toprule
    &&&\multicolumn{3}{c|}{Max. Degree $\Delta$} & \multicolumn{2}{c|}{\PARGreedyPP} & \multicolumn{2}{c}{\PARSortingPP}\\
    Graph Dataset & Num. Vertices & $\tilde \rho$ & $G$ & $\lceil \frac{k_{max}}{2}\rceil$-core & $\lceil \tilde \rho \rceil$-core  & No Pruning & Pruning & No Pruning & Pruning \\
    \midrule
\texttt{closecliques} & 3,230 & 29.55665 & 2,000 & 2,000 & 2,000 & 59 & 59 & 2,000 & 2,000\\
\texttt{cahepth} & 9,877 & 15.5 & 65 & 31 & 31 & 31 & 31 & 65 & 31\\
\texttt{ascaida} & 26,475 & 17.53409 & 2,628 & 146 & 76 & 44 & 45 & 2,628 & 56\\
\texttt{hepph} & 28094 & 265.969 & 4,909 & 3,259 & 2,589 & 683 & 633 & 4,909 & 2,543\\
\texttt{dblp} & 317,080 & 56.56522 & 343 & 114 & 114 & 114 & 114 & 343 & 114\\
\texttt{brain} & 784,262 & 1,057.458 & 21,743 & 16,151 & 7,686 & 2,545 & 2,676 & 21,743 & 13,523\\
\texttt{wiki} & 1,094,018 & 108.5877 & 141,951 & 2,288 & 1,025 & 312 & 338 & 141,951 & 1,023\\
\texttt{youtube} & 1,138,499 & 45.59877 & 28,754 & 4,064 & 1,108 & 137 & 139 & 28,754 & 1,392\\
\texttt{stackoverflow} & 2,584,164 & 181.5867 & 44,065 & 14,613 & 3,783 & 640 & 619 & 44,065 & 3,787\\
\texttt{livejournal} & 4,846,609 & 229.8459 & 20,333 & 1,010 & 629 & 546 & 535 & 20,333 & 1,010\\
\texttt{orkut} & 3,072,441 & 227.874 & 33,313 & 13,162 & 7,186 & 755 & 779 & 33,313 & 7,447\\
\texttt{twitter} & 41,652,230 & 1,643.301 & 2,997,487 & 19,549 & 10,705 & 3,859 & 3,737 & 2,997,487 & 10,286\\
\texttt{friendster} & 65,608,366 & 273.5187 & 5,214 & 2,952 & 2,190 & 1,092 & 1,018 & 5,214 & 2,431\\
\texttt{clueweb} & 978,408,098 & 2,122.5 & 75,611,696 & 7,707 & 7,707 & 6,661 & 7,065 & 75,611,696 & 7,707\\
\texttt{hyperlink2012} & 3,563,602,789 & 6,496.649 & 95,041,164 & 67,920 & 3,740 & n/a & 17,287 & 95,041,163 & 67,792\\

    \bottomrule
    \end{tabular}
    \end{center}
    \caption{Empirical widths in our experiments.}
    \label{tab:exp_width}
    \end{table*}

        \begin{table*}[h!]
    \begin{center}
    \footnotesize
    \begin{tabular}{l||r|rr|rr|r}
\toprule
 &   &\multicolumn{2}{c|}{ \texttt{CoreApp}}   & \multicolumn{2}{c|}{\texttt{PaRSorting++}}   & \texttt{CoreExact}\\
Graph Dataset & $\tilde \rho$ & Density & time(ms) & Density & time(ms) & time(ms)\\
\midrule
\texttt{asCaida} & 17.53 & 16.72 & 4704 & 17.53 & 13 & 4704\\
\texttt{caHepTh} & 15.5 & 15.5 & 417 & 15.5 & 6 & 417\\
\texttt{brain} & 1057.45 & 1006.94 & 23573 & 1057.45 & 76959 & n/a\\
\texttt{dblp} & 56.57 & 56.5 & 65 & 56.57 & 154 & 247018\\
\texttt{hepph} & 265.97 & 205 & 61 & 265.97 & 1351 & n/a\\
\texttt{lj} & 229.85 & 195.86 & 2295 & 229.85 & 5344 & n/a\\
\texttt{orkut} & 227.87 & 219.32 & 17757 & 227.87 & 14740 & n/a\\
\texttt{stackoverflow} & 181.59 & 173.23 & 3599 & 181.59 & 4462 & n/a\\
\texttt{wiki} & 108.59 & 102.12 & 408 & 108.59 & 487 & n/a\\
\texttt{youtube} & 45.60 & 43.03 & 273 & 45.60 & 642 & n/a\\
    \bottomrule
    \end{tabular}
    \end{center}
    \caption{Comparison between \PARSortingPP\ on one thread and algorithms from~\cite{FangYCLL19}.}
    \label{table:vldb_compare}
    \end{table*}

\begin{table*}[h]
    \begin{center}
    \footnotesize
    \begin{tabular}{l||rr|rr|rr}
    \toprule
    Graph & \GreedyPP & \GreedySortingPP &\PARGreedyPP & \PARSortingPP & \ApproxGP & \ApproxGS \\
    \midrule
   \texttt{brain} & 1057.4578 & 1057.2957 & 1057.4578 & 1057.4577 & 1057.4578 & 1057.4570 \\
   \texttt{dblp} & 56.5652 & 56.5652 & 56.5652 & 56.5652 & 56.5652 & 56.5652 \\
   \texttt{hepph} & 265.9685 & 265.9559 & 265.9592 & 265.9294 & 265.9592 & 265.9689 \\
   \texttt{lj} & 229.8459 & 216.5275 & 229.8395 & 229.8318 & 229.8395 & 229.8459 \\
   \texttt{orkut} & 227.8740 & 227.5191 & 227.8682 & 227.8612 & 227.8682 & 227.8734 \\
   \texttt{stackoverflow} & 181.5867 & 181.4131 & 181.5844 & 181.5858 & 181.5844 & 181.5867 \\
   \texttt{wiki} & 108.5900 & 108.2407 & 108.5894 & 108.5898 & 108.5894 & 108.5900 \\
   \texttt{youtube} & 45.5988 & 45.1702 & 45.5722 & 45.5831 & 45.5722 & 45.5988 \\
   \texttt{twitter} & n/a & n/a & 1643.301712 & 1638.910735 & 1643.301712 & 1639.76604 \\
   \texttt{friendster} & n/a & n/a &  273.5180 &  273.5180 &  273.5180 &  273.5186 \\
   \texttt{clueweb} & n/a & n/a & 2122.499529 & 2122.499529 & 2122.499529 & 2122.499529 \\
   \texttt{hyperlink2012} & n/a & n/a & 6496.6493 & 6496.5325 & 6496.6493 & 6496.6493 \\
    \bottomrule

    \end{tabular}
    \end{center}
    \caption{Densities found after $20$ iterations ($100$ iterations for \ApproxGS)}
    \label{tab:graph_density_algos}
    \end{table*}

\begin{table*}[h]
    \begin{center}
    \footnotesize
    \begin{tabular}{ll||rr|rr|rr}
    \toprule
    Graph & Thread & \GreedyPP & \GreedySortingPP &\PARGreedyPP & \PARSortingPP & \ApproxGP & \ApproxGS \\
    \midrule
   \texttt{brain} & 1 & 220952.0 & 28582.1 & 26808.5 & 20144.3 & 43326.7 & 25838.2 \\
   \texttt{brain} & 2 & 177920.0 & 14597.6 & 19744.6 & 15794.5 & 27776.9 & 13991.9 \\
   \texttt{brain} & 4 & 166769.0 & 8797.73 & 14976.7 & 9459.33 & 18578.5 & 8113.02 \\
   \texttt{brain} & 8 & 109949.0 & 4787.92 & 9297.5 & 5608.43 & 11002.9 & 4263.11 \\
   \texttt{brain} & 16 & 85629.9 & 2487.92 & 6244.45 & 3489.77 & 7100.09 & 2428.17 \\
   \texttt{brain} & 30 & 77753.9 & 1404.68 & 5196.04 & 2690.38 & 5629.55 & 1603.39 \\
   \texttt{brain} & 60 & 84611.1 & 1098.11 & 5428.55 & 2714.91 & 5827.92 & 1306.19 \\
    \midrule
   \texttt{dblp} & 1 & 3533.22 & 746.742 & 143.232 & 133.875 & 175.727 & 172.001 \\
   \texttt{dblp} & 2 & 2458.81 & 406.275 & 95.881 & 86.882 & 105.115 & 101.425 \\
   \texttt{dblp} & 4 & 2032.91 & 279.44 & 73.95 & 52.097 & 72.714 & 59.442 \\
   \texttt{dblp} & 8 & 1470.16 & 141.829 & 41.179 & 32.945 & 45.005 & 36.64 \\
   \texttt{dblp} & 16 & 1208.5 & 74.083 & 29.459 & 24.274 & 31.37 & 22.931 \\
   \texttt{dblp} & 30 & 1113.5 & 51.771 & 30.46 & 22.847 & 27.948 & 19.409 \\
   \texttt{dblp} & 60 & 1259.32 & 45.628 & 36.815 & 25.436 & 31.219 & 21.418 \\
    \midrule
   \texttt{hepph} & 1 & 2611.14 & 136.274 & 692.539 & 234.824 & 845.31 & 477.387 \\
   \texttt{hepph} & 2 & 2663.3 & 77.069 & 689.243 & 181.719 & 762.716 & 278.305 \\
   \texttt{hepph} & 4 & 2135.64 & 53.577 & 612.173 & 119.875 & 531.692 & 157.652 \\
   \texttt{hepph} & 8 & 2203.48 & 28.795 & 475.481 & 90.68 & 411.762 & 101.529 \\
   \texttt{hepph} & 16 & 1987.48 & 20.69 & 390.74 & 77.119 & 383.619 & 60.612 \\
   \texttt{hepph} & 30 & 2044.24 & 16.326 & 438.746 & 86.953 & 411.679 & 57.718 \\
   \texttt{hepph} & 60 & 2517.87 & 17.031 & 488.445 & 103.12 & 513.2 & 57.273 \\
    \midrule
   \texttt{lj} & 1 & 108607.0 & 20304.8 & 4936.7 & 4683.37 & 6127.28 & 5890.9 \\
   \texttt{lj} & 2 & 65743.0 & 10066.3 & 3039.3 & 2745.38 & 3091.35 & 2852.85 \\
   \texttt{lj} & 4 & 50161.5 & 6302.07 & 2222.64 & 1979.14 & 2010.01 & 1796.11 \\
   \texttt{lj} & 8 & 31722.8 & 3369.24 & 1376.23 & 1134.88 & 1162.66 & 950.565 \\
   \texttt{lj} & 16 & 21764.1 & 1711.91 & 865.884 & 691.665 & 677.034 & 513.773 \\
   \texttt{lj} & 30 & 17975.0 & 1037.72 & 701.209 & 531.227 & 517.277 & 345.649 \\
   \texttt{lj} & 60 & 18962.6 & 668.648 & 698.862 & 495.418 & 482.402 & 271.28 \\
    \midrule
   \texttt{orkut} & 1 & 191907.0 & 24002.7 & 13735.7 & 9988.65 & 16773.2 & 13111.2 \\
   \texttt{orkut} & 2 & 121504.0 & 12029.7 & 8935.86 & 5941.39 & 9523.34 & 6178.15 \\
   \texttt{orkut} & 4 & 98716.8 & 7196.5 & 6875.54 & 4285.67 & 6595.43 & 3751.9 \\
   \texttt{orkut} & 8 & 61121.6 & 3823.59 & 4342.05 & 2523.46 & 4021.5 & 2075.75 \\
   \texttt{orkut} & 16 & 42554.3 & 1983.26 & 3004.36 & 1685.85 & 2669.04 & 1123.17 \\
   \texttt{orkut} & 30 & 34377.5 & 1195.75 & 2373.1 & 1133.72 & 2138.11 & 743.157 \\
   \texttt{orkut} & 60 & 35795.4 & 862.46 & 2457.03 & 1050.73 & 2180.69 & 586.859 \\
    \midrule
   \texttt{stackoverflow} & 1 & 57519.5 & 10986.7 & 3830.47 & 2889.86 & 4853.17 & 4030.38 \\
   \texttt{stackoverflow} & 2 & 35733.6 & 5544.66 & 2657.3 & 1716.76 & 3007.77 & 2077.23 \\
   \texttt{stackoverflow} & 4 & 28144.2 & 3385.08 & 2120.63 & 1219.33 & 2224.51 & 1265.24 \\
   \texttt{stackoverflow} & 8 & 18271.6 & 1789.24 & 1502.94 & 704.916 & 1510.7 & 702.252 \\
   \texttt{stackoverflow} & 16 & 12989.2 & 953.728 & 1150.8 & 414.358 & 1157.95 & 387.525 \\
   \texttt{stackoverflow} & 30 & 10982.0 & 568.138 & 1080.21 & 312.7 & 1088.32 & 294.483 \\
   \texttt{stackoverflow} & 60 & 11533.3 & 391.7 & 1212.31 & 312.277 & 1187.38 & 231.058 \\
    \midrule
   \texttt{wiki} & 1 & 8276.28 & 2573.86 & 439.37 & 349.467 & 522.626 & 441.785 \\
   \texttt{wiki} & 2 & 5124.66 & 1310.03 & 316.066 & 219.355 & 355.683 & 254.214 \\
   \texttt{wiki} & 4 & 3491.19 & 789.241 & 201.594 & 148.295 & 247.615 & 159.569 \\
   \texttt{wiki} & 8 & 2264.19 & 425.664 & 151.524 & 88.289 & 151.98 & 80.512 \\
   \texttt{wiki} & 16 & 1544.02 & 219.34 & 117.327 & 56.037 & 127.026 & 52.78 \\
   \texttt{wiki} & 30 & 1368.65 & 161.502 & 130.436 & 53.923 & 130.032 & 40.94 \\
   \texttt{wiki} & 60 & 1521.62 & 121.401 & 146.298 & 59.783 & 139.855 & 42.321 \\
    \midrule
   \texttt{youtube} & 1 & 11292.7 & 3374.66 & 584.868 & 482.547 & 732.791 & 613.549 \\
   \texttt{youtube} & 2 & 6959.4 & 1680.46 & 397.548 & 286.539 & 455.884 & 339.003 \\
   \texttt{youtube} & 4 & 5074.02 & 1101.43 & 311.289 & 199.618 & 353.2 & 229.03 \\
   \texttt{youtube} & 8 & 3332.69 & 587.605 & 187.655 & 94.655 & 203.412 & 125.064 \\
   \texttt{youtube} & 16 & 2385.11 & 282.659 & 165.859 & 68.592 & 171.807 & 68.367 \\
   \texttt{youtube} & 30 & 1990.75 & 177.756 & 173.735 & 65.958 & 177.212 & 59.727 \\
   \texttt{youtube} & 60 & 2234.96 & 129.412 & 197.85 & 66.573 & 196.424 & 57.201 \\
    \midrule
\texttt{twitter}  & 60 & n/a & n/a & 10447.0 & 8413.4 & 9777.84 & 8509.94 \\
\texttt{friendster}  & 60 & n/a & n/a & 15357.8 & 10548.9 & 15150.3 & 9739.67 \\
\texttt{clueweb}  & 60 & n/a & n/a & 112645 & 83919.9 & 120970.0 & 87098.5 \\
\texttt{hyperlink2012}  & 60 & n/a & n/a & 352658 & 270398 & 375710 & 293445 \\

    \bottomrule

    \end{tabular}
    \end{center}
    \caption{Runtime for each graph on variants of our algorithms (in ms). Each algorithm is run for $20$ iterations except \ApproxGS where we run it for $100$ iterations.}
    \label{tab:graph_runtime_thread_algos}
    \end{table*}

\begin{table*}[h]
    \begin{center}
    \footnotesize
    \begin{tabular}{ll||rr|rr}
    \toprule
    Graph & Thread  &\PARGreedyPP & \PARSortingPP & \ApproxGP & \ApproxGS \\
    \midrule
   \texttt{brain} & 1 & 19138.2600 & 19077.6080 & 35608.3400 & 20482.6440 \\
   \texttt{brain} & 2 & 12511.9400 & 15238.0370 & 20504.3200 & 11283.8580 \\
   \texttt{brain} & 4 & 9233.2400 & 9154.3650 & 12835.6700 & 6547.5610 \\
   \texttt{brain} & 8 & 5449.5300 & 5445.0860 & 7186.9900 & 3449.3480 \\
   \texttt{brain} & 16 & 3402.8200 & 3393.6150 & 4265.0100 & 1982.5970 \\
   \texttt{brain} & 30 & 2615.7860 & 2634.6050 & 3097.7720 & 1323.6980 \\
   \texttt{brain} & 60 & 2655.9610 & 2665.0420 & 2868.9830 & 1074.6020 \\
    \midrule
   \texttt{dblp} & 1 & 134.1470 & 133.0390 & 166.6210 & 163.7430 \\
   \texttt{dblp} & 2 & 86.2590 & 86.1380 & 95.4420 & 93.9770 \\
   \texttt{dblp} & 4 & 63.3860 & 51.4390 & 62.1480 & 52.4300 \\
   \texttt{dblp} & 8 & 32.2350 & 32.3150 & 35.5430 & 30.7410 \\
   \texttt{dblp} & 16 & 21.5820 & 23.7550 & 22.4410 & 20.0650 \\
   \texttt{dblp} & 30 & 20.4660 & 22.4410 & 17.8620 & 16.5390 \\
   \texttt{dblp} & 60 & 24.6170 & 24.8650 & 18.9210 & 17.5910 \\
    \midrule
   \texttt{hepph} & 1 & 201.5330 & 202.1600 & 355.2360 & 202.3690 \\
   \texttt{hepph} & 2 & 165.5110 & 163.6800 & 246.7400 & 131.2910 \\
   \texttt{hepph} & 4 & 147.5400 & 109.7290 & 150.9970 & 73.6500 \\
   \texttt{hepph} & 8 & 109.0430 & 84.1890 & 101.0010 & 47.1910 \\
   \texttt{hepph} & 16 & 78.3790 & 72.5540 & 84.0600 & 27.4770 \\
   \texttt{hepph} & 30 & 90.8670 & 82.3670 & 86.0790 & 25.8150 \\
   \texttt{hepph} & 60 & 105.5090 & 98.5230 & 130.6320 & 25.1280 \\
    \midrule
   \texttt{lj} & 1 & 4698.8900 & 4656.7230 & 5889.5820 & 5733.1680 \\
   \texttt{lj} & 2 & 2761.8070 & 2730.5270 & 2817.9500 & 2759.1370 \\
   \texttt{lj} & 4 & 1968.6490 & 1968.1820 & 1756.3180 & 1720.9810 \\
   \texttt{lj} & 8 & 1169.4470 & 1127.4190 & 962.0660 & 897.7350 \\
   \texttt{lj} & 16 & 689.1800 & 685.9260 & 501.9480 & 476.5230 \\
   \texttt{lj} & 30 & 535.1070 & 526.6380 & 351.0520 & 316.0780 \\
   \texttt{lj} & 60 & 491.1170 & 490.9540 & 269.9850 & 243.3730 \\
    \midrule
   \texttt{orkut} & 1 & 10066.2290 & 9714.2330 & 13102.3690 & 11325.5780 \\
   \texttt{orkut} & 2 & 5734.3990 & 5796.1910 & 6324.2510 & 5231.8120 \\
   \texttt{orkut} & 4 & 4193.7490 & 4192.9590 & 3958.3140 & 3114.4700 \\
   \texttt{orkut} & 8 & 2496.5830 & 2468.8870 & 2182.2730 & 1702.6790 \\
   \texttt{orkut} & 16 & 1492.9540 & 1650.6160 & 1223.4590 & 887.7470 \\
   \texttt{orkut} & 30 & 1086.8080 & 1106.2830 & 849.1290 & 574.9270 \\
   \texttt{orkut} & 60 & 1033.0520 & 1027.3420 & 737.1380 & 435.4910 \\
    \midrule
   \texttt{stackoverflow} & 1 & 2826.2500 & 2838.8500 & 3846.5220 & 3248.5210 \\
   \texttt{stackoverflow} & 2 & 1693.1140 & 1688.1870 & 2044.4110 & 1649.1120 \\
   \texttt{stackoverflow} & 4 & 1189.3280 & 1197.3740 & 1310.6420 & 976.9350 \\
   \texttt{stackoverflow} & 8 & 705.7360 & 691.5880 & 707.8910 & 523.7900 \\
   \texttt{stackoverflow} & 16 & 404.6680 & 404.9710 & 412.2330 & 278.3090 \\
   \texttt{stackoverflow} & 30 & 318.5510 & 305.6890 & 319.8430 & 205.1740 \\
   \texttt{stackoverflow} & 60 & 302.8080 & 305.3290 & 272.3660 & 148.5310 \\
    \midrule
   \texttt{wiki} & 1 & 345.3700 & 342.8610 & 429.8260 & 396.4320 \\
   \texttt{wiki} & 2 & 215.3720 & 215.1640 & 256.0070 & 223.2740 \\
   \texttt{wiki} & 4 & 123.8130 & 144.7140 & 157.7860 & 130.8230 \\
   \texttt{wiki} & 8 & 76.9270 & 85.6690 & 83.8040 & 65.2850 \\
   \texttt{wiki} & 16 & 53.2990 & 53.7960 & 58.8050 & 39.4360 \\
   \texttt{wiki} & 30 & 52.7480 & 52.2590 & 52.3950 & 31.3690 \\
   \texttt{wiki} & 60 & 56.8670 & 56.9740 & 50.0650 & 27.0120 \\
    \midrule
   \texttt{youtube} & 1 & 469.5060 & 473.8470 & 616.2400 & 552.6010 \\
   \texttt{youtube} & 2 & 282.2730 & 281.0320 & 342.0240 & 297.2900 \\
   \texttt{youtube} & 4 & 194.1850 & 192.7370 & 237.8010 & 187.6070 \\
   \texttt{youtube} & 8 & 93.4370 & 91.5950 & 108.9290 & 95.9710 \\
   \texttt{youtube} & 16 & 67.9350 & 64.7360 & 74.4110 & 51.3480 \\
   \texttt{youtube} & 30 & 64.3360 & 61.9730 & 64.6510 & 42.2600 \\
   \texttt{youtube} & 60 & 63.2840 & 62.9000 & 63.0370 & 37.0870 \\
    \midrule
\texttt{twitter} & 60 & 7742.9950 & 6579.0130 & 6942.9250 & 6159.4840 \\
\texttt{friendster} & 60 & 11432.3900 & 10407.3190 & 13534.1800 & 8243.0770 \\
\texttt{clueweb} & 60 & 83446.3840 & 82912.9900 & 90663.6150 & 84847.2550 \\
\texttt{hyperlink2012} & 60 & 269341.7400 & 271172.1140 & 293033.4600 & 285467.1890 \\

    \bottomrule
    \end{tabular}
    \end{center}
    \caption{Initialization time for each graph on variants of our algorithms (in ms).}
    \label{tab:graph_inittime_thread_algos}
    \end{table*}

\clearpage
\section{Additional Figures from Section~\ref{sec:exp}}
\label{app:sec:figures}
\begin{figure*}[h!]
\centering
\includegraphics[width=1\textwidth]{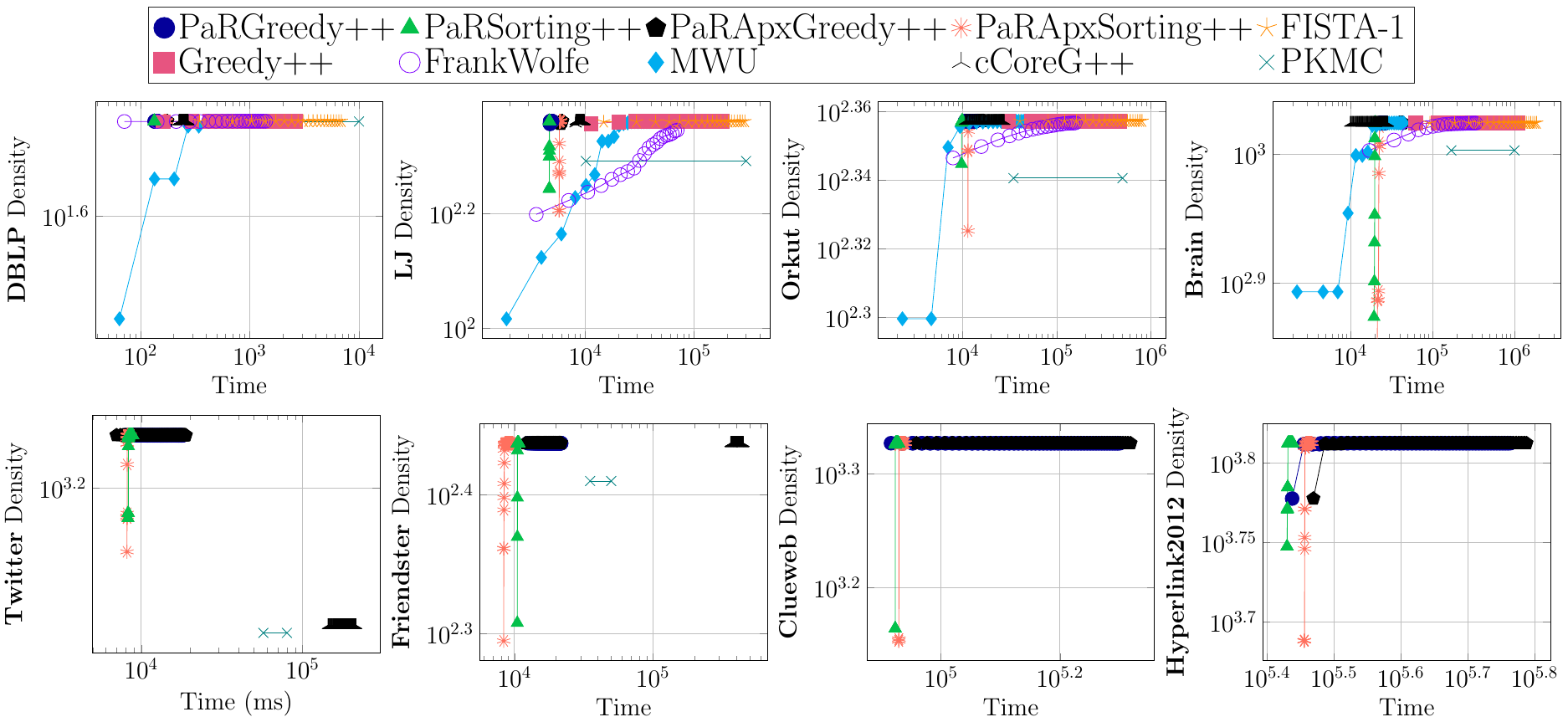}
\caption{Densities and time (ms) for various algorithms. \PARGreedyPP, \PARSortingPP, and \FISTA are run on one thread here for smaller graphs (top row) and $60$ threads for the large graphs (bottom row). 
}\label{fig:plot_den_time}
\end{figure*}

\begin{figure*}[h!]
\centering
\includegraphics[width=1\textwidth]{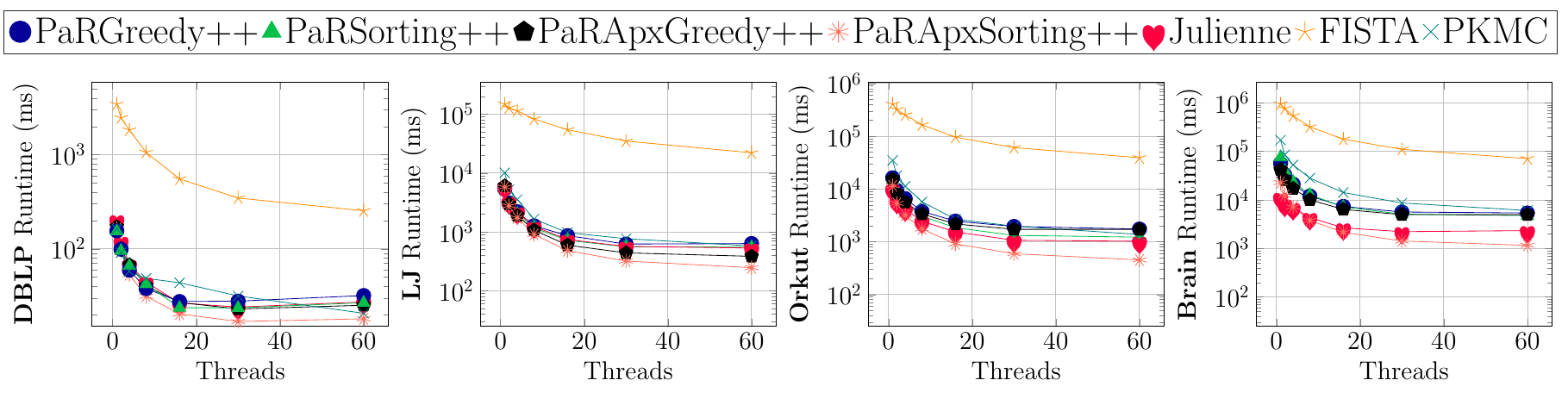}
\caption{Runtimes (ms) of \PARGreedyPP, \PARSortingPP, \ApproxGP, \ApproxGS, \julienne, \FISTA, and \pkmc versus the number of threads when running for $10$ iterations. }
\label{fig:threads-10}
\end{figure*}

\begin{figure*}[h!]
\centering
\includegraphics[width=1\textwidth]{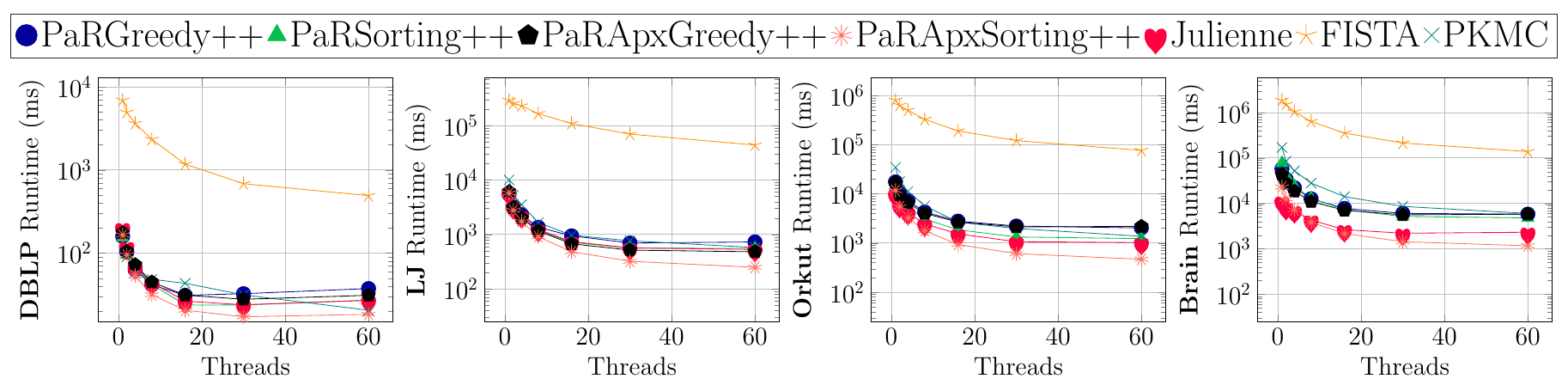}
\caption{Runtimes (ms) of \PARGreedyPP, \PARSortingPP, \ApproxGP, \ApproxGS, \julienne, \FISTA, and \pkmc versus the number of threads when running for $20$ iterations. }
\label{fig:threads-20}
\end{figure*}

\end{document}